\newcites{supp}{Supplementary References}
\newcommand{\blue}{\color{black}}
\newtheorem{theorem}{Theorem}
\newtheorem{corollary}{Corollary}
\newtheorem{proposition}{Proposition}
\newtheorem{lemma}{Lemma}
\newtheorem{example}{Example}
\newtheorem{definition}{Definition}
\newtheorem{remark}{Remark}
\newtheorem{proc}{Procedure}
\newcommand{\beq}{\begin{equation}}
\newcommand{\eeq}{\end{equation}}
\newcommand{\beas}{\begin{eqnarray*}}
\newcommand{\eeas}{\end{eqnarray*}}
\newcommand{\bea}{\begin{eqnarray}}
\newcommand{\eea}{\end{eqnarray}}
\newcommand{\bei}{\begin{itemize}}
\newcommand{\eei}{\end{itemize}}
\newcommand{\ben}{\begin{enumerate}}
\newcommand{\een}{\end{enumerate}}
\newcommand{\bet}{\begin{theorem}}
\newcommand{\eet}{\end{theorem}}
\newcommand{\bel}{\begin{lemma}}
\newcommand{\eel}{\end{lemma}}
\newcommand{\bep}{\begin{proposition}}
\newcommand{\eep}{\end{proposition}}
\newcommand{\bed}{\begin{definition}}
\newcommand{\eed}{\end{definition}}
\newcommand{\bec}{\begin{corollary}}
\newcommand{\eec}{\end{corollary}}
\newcommand{\bex}{\begin{example}}
\newcommand{\eex}{\end{example}}
\newcommand{\PP}{\mathbb P}
\newcommand{\EE}{\mathbb E}
\newcommand{\PBD}{\texttt {PBD}}
\def\0{\boldsymbol{0}}
\def\Z{\boldsymbol{Z}}
\def\z{\boldsymbol{z}}
\def\H{\mathcal{H}}
\newcommand{\lfdr}{\emph{lfdr}}
\newenvironment{boxedtable}[2]{\begin{table}[!htbp]\noindent{\bf \caption{#1 \label{#2}}}\small} {\noindent\end{table}}
\newenvironment{boxedfigure}[2]{\begin{figure}[!htbp]\noindent{\bf \caption{#1 \label{#2}}}\small} {\noindent\end{figure}}
\newcommand{\tablexplain}[1]{\noindent\small{#1}}
\newenvironment{ctabular}{\begin{center}\begin{tabular}}{\end{tabular}\end{center}}
\newcolumntype{Y}{>{\centering\arraybackslash}X}
\begin{document}

\def\spacingset#1{\renewcommand{\baselinestretch}%
{#1}\small\normalsize} \spacingset{1}


\title{\Large \textbf{An Empirical Bayes Approach to Controlling the False Discovery Exceedance}}


\author{
\begin{tabular}[t]{c@{\extracolsep{4em}}c} 
\textbf{Pallavi Basu}  & \textbf{Luella Fu} \\
{\normalsize Indian School of Business} & {\normalsize San Francisco State University}\\
{\normalsize Statistics \textit{\&} OR, Tel Aviv University}& {\normalsize \it{luella@sfsu.edu}}\\
{\normalsize \it{pallavi\_basu@isb.edu}}\\
\\
\textbf{Alessio Saretto}  & \textbf{Wenguang Sun} \\
{\normalsize Federal Reserve Bank of Dallas} & {\normalsize USC Marshall School of Business}\\
{\normalsize \it{alessio.saretto@dal.frb.org}}& {\normalsize \it{wenguans@marshall.usc.edu}}\\
\end{tabular}
}

\date{}


\maketitle

\abstract{\noindent In large-scale multiple hypothesis testing problems, the false discovery exceedance (FDX) provides a desirable alternative to the widely used false discovery rate (FDR) when the false discovery proportion  (FDP) is highly variable. We develop an empirical Bayes approach to control the FDX. We show that, for independent hypotheses from a two-group model and dependent hypotheses from a Gaussian model fulfilling the exchangeability condition, an oracle decision rule based on ranking and thresholding the local false discovery rate (\lfdr) is optimal in the sense that the power is maximized subject to the FDX constraint. We propose a data-driven FDX procedure that uses carefully designed computational shortcuts to emulate the oracle rule. We investigate the empirical performance of the proposed method using both simulated and real data and study the merits of FDX control through an application for identifying abnormal stock trading strategies.


\vspace{1in}
\begin{quote}
\textit{Keywords:} Cautious Data Mining; Local False Discovery Rate; Multiple Hypothesis Testing; Poisson Binomial Distribution; Trading Strategies
\end{quote}



\spacingset{1.8}

\section{Introduction}


\subsection{False Discovery Proportion}

Multiple hypothesis testing provides a useful and powerful technique for identifying sparse signals in massive data. To account for the multiplicity in large-scale testing problems, the false discovery rate (FDR; \citealp{BenHoc95}) has been widely used, and a plethora of FDR procedures, exemplified by the Benjamini-Hochberg (BH) procedure, have been developed to control the FDR. Control of the FDR guarantees that the expected value of the false discovery proportion (FDP) over repeated trials is below a pre-specified level. However, the simple and elegant FDR notion still allows for 
large variability in the FDPs. 
Specifically, the effective control of the FDR at the nominal level $\alpha$ does not imply the effective control of the FDP in a particular experiment. This is concerning since scientific findings are often reported based on experiments performed only a few times or even just once, exacerbating the replicability crisis. The issue can be particularly worrying in applications where the signals are sparse and weak, or where the tests are highly dependent -- in such settings classical methods such as BH often exhibit unstable behaviors, leading to highly variable and possibly skewed FDPs across different experiments \citep{Kornetal2004, DelattreRoquain2015}. As an extreme example, consider a multiple testing procedure that produces FDPs of $0.2$ in $50\%$ of the trials and makes no rejection in the other trials (by convention the FDP in these trials is set $0$). Suppose that the nominal FDR level is 0.1. Then, this procedure successfully produces an FDR at exactly the nominal level, but it does a poor job in every single trial, where in half the trials, the procedure produces too many false rejections and in the other half, the procedure fails to find any signals. This is clearly an undesirable situation, where more stringent FDP control is needed. 


\subsection{False Discovery Exceedance}



We propose to directly control the tail probability of the FDP, also known as the \emph{false discovery exceedance} (FDX). The FDX provides a more appropriate alternative to the FDR in practical situations where the FDPs are highly variable. Concretely, an FDX procedure at level $(\gamma, \alpha)$ aims to keep the probability that the FDP exceeds a tolerable proportion of false rejections, $\gamma$, at less than a pre-specified small number, $\alpha$. For example, an FDX procedure at level $(10\%, 5\%)$ guarantees that the probability of having 10\% or more false discoveries among all rejections is less than 5\%.  In contrast with the FDR that has no control of large FDPs in specific trials, the FDX criterion guarantees that a \emph{single} implementation of an FDX method has at least a 95\% chance of producing no more than 10\% false rejections.

The FDX can be understood via the notion of $k$--family-wise error rate ($k$-FWER), which generalizes the FWER by incorporating a tolerable number of false rejections, $k$, into its definition. Both the $k$-FWER and FDX procedures aim to control the probability of an undesirable event to be small, with the key difference being whether the event is characterized by a count (FWER) or a proportion (FDX) of false positives. The merits of controlling the FDX have been discussed in \citet{GenoWasser2004,GenoWasser2006, GuoRomano07, ChiTan08, gordon2008optimality, DelattreRoquain2015}, among others. A clear advantage for the characterization of the tolerance level via a proportion is that the FDX methods can easily scale up with the number of tests, as FDR methods do. Various FDX procedures, some of which will be discussed and compared 
in later sections, have been proposed in the literature \citep{LehmannRomano05, ChiTan08, RoquainVillers11, dohler2020controlling}. 


\subsection{Empirical Bayes (EB) FDX Control}

We propose an empirical Bayes procedure for FDX control that relies on first ranking the local false discovery rates (\lfdr) and second using the Poisson binomial distribution to compute the \lfdr \ cutoff. By contrast, most existing FDX procedures are based on $p$-values. The hypotheses are first ordered according to their respective $p$-values and then a $p$-value cutoff is determined based on the pre-specified FDX level $(\gamma, \alpha)$. The choice of cutoff is the key difference that distinguishes existing $p$-value methods from one another.
We prove that the \lfdr \ ranking is optimal \citep[see also,][]{fu2018nonparametric} and that the procedure controls the FDX at the pre-specified level \citep[see also,][]{basu2016model}. Useful techniques are developed such as the Poisson binomial characterization of the false discovery process and efficient computational cutoffs. We demonstrate that the new techniques lead to substantial power gain over conventional $p$-value based methods. The proposed FDX procedure is simple and fast to implement, and is capable of handling millions of tests. We give an overview of the algorithm in the prototype below. 

\begin{center}
\fbox{\begin{minipage}{38em}
\textbf{The proposed EB-FDX procedure:~A prototype algorithm} 
\begin{description}
 \item Step 1. Compute the \emph{lfdr}, adjusting for empirical null if needed. Sort all the hypotheses by increasing order of the estimated \emph{lfdr} statistics. 
 
 \item Step 2. Compute the probability of the undesirable event that the cumulative failure proportion is greater than $\gamma$ using Poisson binomial distribution (PBD); the Bernoulli probabilities in the PBD are the lfdr values as computed in Step 1. 
 
 \item Step 3. Determine the maximum number of rejections via carefully constructed computational shortcuts (Section \ref{sec:computational-shortcuts}) to ensure that the probability of the undesirable event is smaller than or equal to $\alpha$.
 
\end{description}

\end{minipage}}
\end{center}
\vspace{0.2in}

\subsection{Our Contributions}


Under the widely used two-group model (\citealp{efron2001empirical}), we formulate the FDX control problem as a constrained optimization problem where the goal is to maximize the expected number of true positives subject to the FDX constraint characterized by the pair $(\gamma, \alpha)$. Our work makes several contributions to theory, methodology,  and practice. 


\begin{itemize}

\item We propose a new empirical Bayes approach to FDX control based on \lfdr \ and illustrate its efficiency gain over existing frequentists $p$-value methods such as \cite{LehmannRomano05} and \cite{GuoRomano07}. 

\item We establish an optimality theory for FDX control by showing that the \emph{lfdr} ranking is optimal in the sense that the thresholding rule based on \emph{lfdr} has the largest power among all FDX procedures that fulfill the FDX constraint. Although there is a large body of work on FDX, the important optimality issue has not previously been well understood. 

\item We develop an efficient computational algorithm for determining a data-driven cutoff along the \emph{lfdr} ranking; this greatly reduces the computational complexity and enables the broad applicability of our method to large-scale problems with millions of tests. We provide theoretical support to justify the shortcut and illustrate that it significantly improves the computational efficiency using numerical examples.

\item We demonstrate the strong empirical performance of the proposed method via both simulated and real data sets. The key strengths of our methodology include simplicity, power gain, and fast computation. 

\end{itemize}

\subsection{Connection to Existing Work}

The proposed procedure builds upon and contributes to the FDX literature. The very notion of controlling the \textit{exceedance} of the FDP was first defined in \cite{GenoWasser2006}. 
The authors constructed a confidence envelope for FDP by inverting a set of uniformity tests. The approach can handle correlated tests from random fields \citep{perone2004false}. The other approach to the FDX problem is through augmentation, which involves expanding the rejection region from FWER procedures \citep{van2004augmentation, farcomeni2009generalized}. \cite{van2005empirical} proposed using  bootstrap-based Monte Carlo methods to generate the states of hypotheses conditional on observed data, and showed that the method is  more powerful than inversion and augmentation methods. However, all above mentioned methods involve ranking hypotheses using $p$-values or adjusted $p$-values. For earlier works on this topic, see \cite{farcomeni2008review} for a detailed review. A bootstrap-based heuristics is developed in \cite{romano2007control}, later formally justified by \cite{DelattreRoquain2015}.


A more recent and closely related line of research aims to provide  confidence bounds for the FDP with respect to a user-specified (aka post hoc) rejection sets, see, for example, \cite{hemerik2019permutation}. Some follow-up works to \cite{hemerik2019permutation} include \cite{blanchard2020post}, \cite{katsevich2020simultaneous} and \cite{goeman2021only}, among others. However, the efficient ranking of the hypotheses is not discussed, and the tightness of the FDP bounds/thresholds remains unknown in these works. This article addresses both issues by developing a new FDX procedure which (a) ranks hypotheses using the \lfdr \ and (b) determines the \lfdr \ cutoff using an efficient and powerful data-driven algorithm. 

Our work is closely related to the method in \cite{dohler2020controlling}, where the Poisson binomial distribution is employed to determine the $p$-value threshold under the frequentist setting. From a theoretical standpoint, the derivation of our FDX procedure resembles the techniques in \citet{heller2021optimal} for optimal multiple testing under an empirical Bayes setting preceded by the thesis works in \citep{basu2016model} and \citep{fu2018nonparametric} on topics in weighted FDR and FDX control. However, the optimal ranking under the FDX formulation, the connection between Poisson binomial distribution and \lfdr, and the empirical Bayes approach to FDX control are new to the literature. Finally, we demonstrate the power gain over other FDX procedures in simulations and provide a real-life application for uncovering interesting financial trading strategies.  

\subsection{Example and Illustrations}

We provide in Figure \ref{Fig.FDX_demo} an illustration of the proposed FDX controlling procedure and highlight a few differences between FDR and FDX control. We compare two \lfdr--based methods:~the adaptive $z$-value procedure, which controls FDR at level 0.1 (\citealp{SunCai07}), and the proposed procedure which controls FDX at level 0.1 and 0.05, (i.e., the probability is kept below 0.05 that the FDP will exceed 0.1). The methods are performed for $5,\!000$ trials. In each trial, because both methods use the \lfdr, the rankings of the hypotheses from the two methods 
are identical, but the different false discovery control mechanisms create different cutoffs along the \lfdr \ ranking. Figure \ref{Fig.FDX_demo} contrasts the $5,\!000$ FDPs resulting from both methods. In contrast with the FDR procedure (left) that only controls the average of the FDPs over many experiments, resulting in roughly half of trials exceeding the target level (black line), the FDX procedure controls the probability of obtaining a high FDP for any trial, resulting in only a small proportion of trials exceeding the target level. Consequently,  the FDP distribution in the proposed FDX procedure is shifted to the left compared to that of the FDR procedure. This illustrates one important merit of the FDX criterion:~FDX methods can effectively reduce the chance of having high FDPs, which may increase the replicability of scientific discoveries. 

\begin{boxedfigure}{Contrasting FDR and FDX methods}{Fig.FDX_demo}
\tablexplain{Realized FDPs of $5,\!000$ replications for $5,\!000$ tests from the following Gaussian mixture model: $0.8 \ \mathcal{N}(0, 1) + 0.2 \ \mathcal{N}(-2, 1)$. This illustration contrasts our proposed FDX procedure with the lfdr-based FDR procedure proposed by \cite{SunCai07}. Both procedures use the oracle version of the \emph{lfdr} test statistic for demonstration purposes. The \cite{SunCai07} procedure aims to control FDR at 0.1, while the FDX procedure aims to keep the probability that FDP exceeds 0.1 below 0.05.}
\begin{center}
\includegraphics[width=0.8\textwidth]{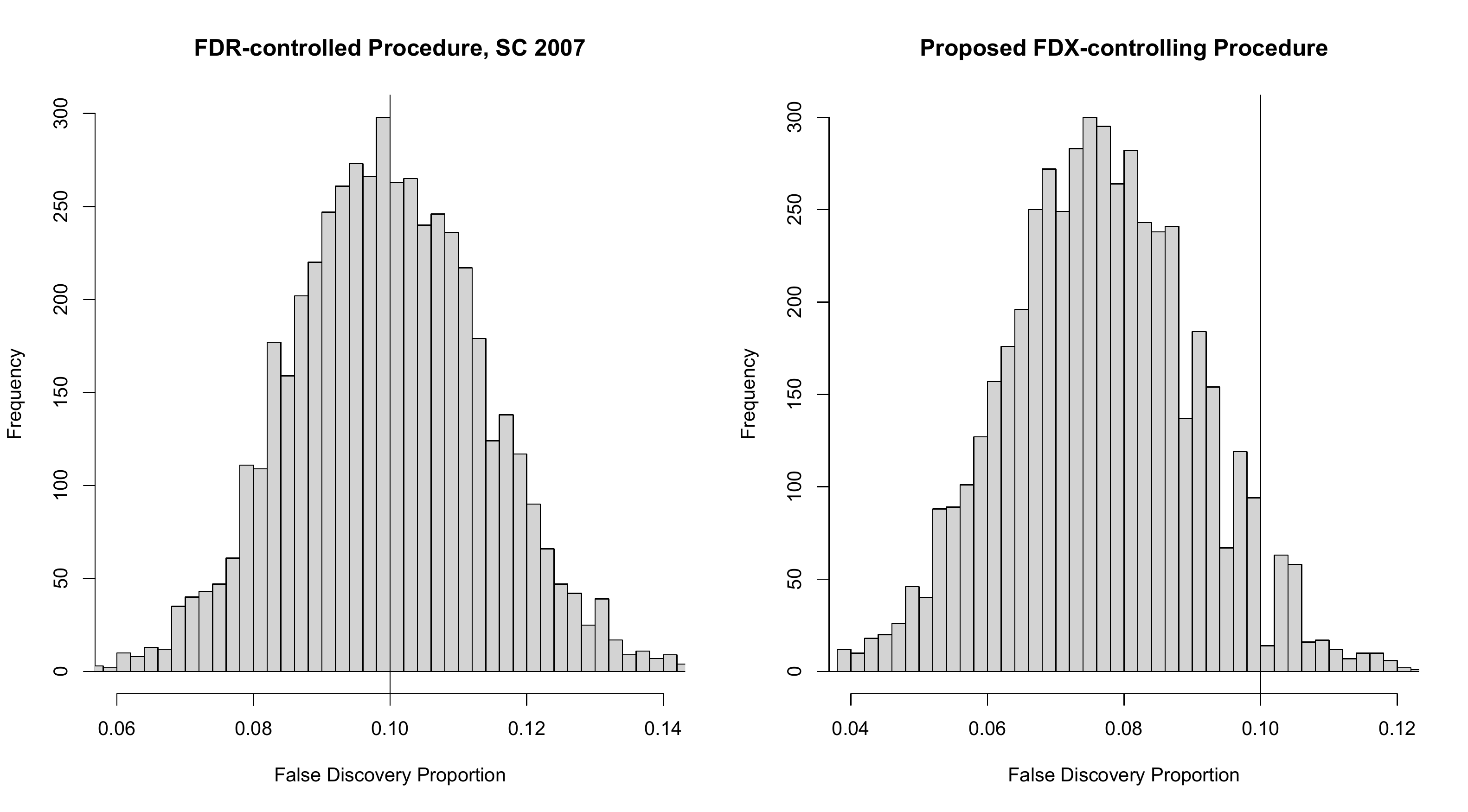}
\end{center}
\end{boxedfigure}


\subsection{Organization}

The rest of the paper is organized as follows. Section 2 introduces notation and sets up the problem. Section 3 describes our proposed solutions. Section 4 discusses further practical concerns about the implementation and \lfdr, which give a greater understanding of the proposed methodology. Section 5 provides numerical simulations. {\blue Section 6 provides an application to real data.} Section 7 concludes with some future propositions. All code for the procedure and experiments can be requested from the first author. Selected proofs and further numerical experiments are described in the supplementary material.


\section{Problem Formulation}\label{sec:formulation}
\subsection{Model and Notation}\label{sec:notation}

Our analysis is based on the premise of a widely used two-group model first described by \cite{efron2001empirical}. Suppose we are interested in testing $m$ hypotheses: $\H = (H^0_i, H^1_i)_{i \in [m]}$, where $i$ represents the index of a study unit; $H^0_i: \theta_i = 0$ and $H^1_i: \theta_i = 1$ are respectively the null and alternative hypotheses corresponding to study unit $i$; $[m]\coloneqq \{1, \hdots, m\}$ denotes the index set of all hypotheses; $\theta_i\in\{0, 1\}$ is the true state of nature with $\theta_i = 0$ representing a true null hypothesis and $\theta_i = 1$ a true alternative. We assume that $\theta_i$ are independent and identically distributed (i.i.d.) variables, obeying a Bernoulli distribution with success probability $\pi=\PP(\theta_i=1)$. The observed data are summarized as a vector of $z$-values $\Z = (Z_i)_{i \in [m]}$, whose distribution can be described using the following (hierarchical) two-group model:
\begin{equation} \label{eq:mixmodel}
\begin{aligned}
& \theta_i \stackrel{i.i.d.}{\sim} \mbox{Bernoulli} (\pi), \\
& Z_i|\theta_i \sim (1-\theta_i) F_{0} + \theta_i  F_{1}, \quad i\in [m],
\end{aligned}
\end{equation}
where $F_{0}$ and $F_1$ are the null and alternative distributions, respectively. Let 
\begin{equation}\label{mixF}
F=(1-\pi)F_0+\pi F_1
\end{equation}
be the mixture distribution. Let $f_0$, $f_1$, and $f$ be the corresponding densities, {\blue which are assumed to exist}. $F_0$ is assumed to be known and taken as the cumulative distribution function (CDF) of an $\mathcal N(0, 1)$ variable. However, \cite{efron2004large} and \cite{JinCai07} argued that the \emph{empirical null} should be employed in practice and provided methods for estimating the {empirical null} distribution. For the present, we assume that $\pi$, $f_0$ and $f$ are known. We discuss the estimation of $\pi$, $f_0$ and $f$  and investigate issues related to the empirical null in Sections \ref{sec:Implementation} and \ref{Sec.Application}. 

A multiple testing rule, which involves making $m$ simultaneous decisions,  can be represented by a binary vector $\pmb\delta=(\delta_i)_{i\in[m]} \in \{0, 1\}^m$. The rule is based on observed data and reflects our belief about the unknown $\pmb\theta=(\theta_i)_{i\in[m]}$. The decision $\delta_i = 1$ indicates the rejection of the null hypothesis (aka ``a statistical discovery"), whereas $\delta_i = 0$ indicates failure to reject the null. The data produces $z$-values, which can then be converted to a significance index that indicates the strength of evidence against the null. Decisions can be made by  thresholding the significance index. The two most widely used significance indices are the (two-sided) $p$-value and the local false discovery rate (\lfdr), which are respectively defined as 
\begin{eqnarray} \label{eq:lfdr}
P_i & = & 2 F_0(-|Z_i|), \nonumber \\
\lfdr(z) & = & \PP(\theta_i=0|Z_i=z) =(1-\pi)  f_0(z) / f(z).
\end{eqnarray}


\cite{SunCai07} showed that $p$-value based methods can be uniformly improved by the adaptive $z$-value (AZ) procedure that is based on ranking and thresholding the \lfdr. In this article, we aim to develop an \lfdr-based decision rule that is provably valid for FDX control. Such an approach has been taken by \cite{SunCai07} and more recently by \cite{basu2018weighted} and \cite{heller2021optimal}. Further, we want to ensure that the methodology is computationally efficient and can be seamlessly used in applications with millions of tests, analogous to  BH \citep{BenHoc95} and AZ \citep{SunCai07} for FDR control. 

\subsection{False Discovery eXceedance (FDX) and Power}

Consider a generic decision rule $\bm{\delta} =(\delta_1, \hdots, \delta_m)\in\{0, 1\}^m$. Using the notations in Section \ref{sec:notation}, the false discovery proportion can be defined as 
\begin{equation}\label{eq:fdp}
\mbox{FDP}\coloneqq \frac{\sum_i ( 1 - \theta_i) \delta_i }{(\sum_i \delta_i) \vee 1},
\end{equation}
where $a \vee b=\max(a,b)$. 
Let $\gamma$ represent a tolerance level on the FDP and $\alpha$ a small probability. An FDX procedure at level $(\gamma, \alpha)$ satisfies 
\begin{equation}\label{eq:fdx}
\mbox{FDX}\coloneqq \PP \left(\mbox{FDP}> \gamma\right)\leq\alpha. 
\end{equation} 
The efficiency or power of an FDX procedure is evaluated using the expected number of true positives:
\begin{equation}\label{eq:power}
\mbox{ETP}\coloneqq \EE \left(\textstyle\sum_i \theta_i \delta_i\right).
\end{equation}
We call an FDX procedure \emph{optimal} if it maximizes the ETP subject to the constraint \eqref{eq:fdx}. 

\section{Oracle Procedure for FDX Control} \label{sec:or}

This section considers an idealized setup where the distributional quantities $\pi$, $f_0$, and $f$ are known. Using these, we propose an oracle rule that controls the FDX at level ($\gamma, \alpha$) (Section \ref{sec:or-proc}) and then establish its validity and optimality (Section \ref{sec:or-prop}). The connection to the Poisson binomial distribution (PBD) is drawn in   Section \ref{sec:pbd}, and  computational shortcuts for fast implementation are developed in Section  \ref{sec:shortcuts}. The estimation of $\pi$, $f_0$, and $f$; and other implementation issues are discussed in Section \ref{sec:Implementation}.

\subsection{Oracle Procedure}\label{sec:or-proc}

Consider a decision rule that rejects $k$ hypotheses. Denote $\mathcal R_k$ the set of rejected hypotheses.  The number of false rejections is given by $\sum_{i\in \mathcal R_k} (1-\theta_i)$. Our derivation of the FDX procedure involves calculating the following tail probability 
\beq\label{tail-prob}
\PP_{\theta|\Z}(\mbox{FDP}>\gamma) = P_{\theta|\Z} \left\{\sum_{i\in \mathcal R_k} (1-\theta_i) > k\gamma\right\}.
\eeq
We shall see that \eqref{tail-prob} can be found using the PBD, which generalizes the binomial distribution to the case when each trial has a different probability of success. 

Denote the Poisson binomial distribution by $\PBD(k, \bm{p})$, with $k$ being the total number of trials and $\bm{p}= (p_i: i = 1, \hdots, k)$ the vector of success probabilities. To reflect that we consider the idealized setup, the notation $T_i^{OR}$ is used to denote the oracle \lfdr \ associated with study unit $i$:
\begin{equation}
\label{fdx:def:tor}
T_i^{OR} \coloneqq \PP(\theta_i = 0| Z_i = z_i) = \frac{(1-\pi)f_{0}(z_i)}{f(z_i)}.
\end{equation}

Procedure \ref{proc:proposedOR} describes our proposed oracle FDX procedure at level $(\gamma, \alpha)$.

\begin{proc}
	\label{proc:proposedOR} \phantom{bhoooot}
	\begin{enumerate}
		\item  Consider the lfdr statistic $(T^{OR}_i)_{i\in[m]}$ defined in \eqref{fdx:def:tor}, and denote $\left(T^{OR}_{(i)}\right)_{i\in[m]}$ the ranked statistic in ascending order. The corresponding null hypotheses are denoted $\left\{H_{(i)}^0: i\in [m]\right\}$.   
		\item Denote $\bm{p}^{(k)} = \left\{T^{OR}_{(1)}, \hdots, T^{OR}_{(k)}\right\}$. Let 
		$$K \coloneqq \max \left\{k: \PP\left(\PBD(k,  \bm{p}^{(k)})>\gamma k \right) \leq \alpha\right\},$$ and reject the top $K$ hypotheses $\{H_{(1)}^0, \cdots, H_{(K)}^0\}$ along the lfdr ranking.
		
	\item Denote $\mathcal R_k$ the set of rejected hypotheses. Reject $H_{(K+1)}^0$ with the following probability  
	\begin{equation}\label{rand:prob}
	\frac{\alpha - \PP_{\theta|\Z}\left \{\sum_{i\in\mathcal R_K} (1-\theta_i) > \gamma K \right\}}{\PP_{\theta|\Z} \left\{\sum_{i\in \mathcal R_{K+1}} (1-\theta_i) > \gamma(K+1)\right\} - \PP_{\theta|\Z} \left\{\sum_{i\in\mathcal R_K} (1-\theta_i) > \gamma K\right\}}. 
	\end{equation}
	\end{enumerate}
\end{proc}
\vspace{0.2in}


Similar to \cite{BenHoc95}, Procedure~\ref{proc:proposedOR} is a step-up procedure, in the sense that it starts from the least significant hypothesis (i.e., the one with the largest \emph{lfdr}) and moves up at each step to a more significant one. The procedure stops when it finds the first null hypothesis, ${H}_{(K)}^0$, for which the tail probability is less than $\alpha$. It then rejects all null hypotheses $\{{H}_{(1)}^0$, \ldots, ${H}_{(K)}^0\}$. The randomization of the decision at the last step ensures that we achieve \emph{exact} FDX control at the nominal level $(\gamma, \alpha)$. This randomization technique was employed in the weighted FDR procedure proposed by \cite{basu2018weighted} and later in \cite{gu2020invidious}. 

\subsection{Poisson Binomial Distribution (PBD) and its Connection to the \emph{lfdr}}\label{sec:pbd}

Poisson's binomial distribution or PBD refers to the sum of independent Bernoulli random variables, with not necessarily equal expectations. In the special case that the expectations are all equal, a PBD simplifies to a binomial distribution. The PBD provides a useful tool for probability calculations in a range of statistical applications (\citealp{chen1997statistical}). More recently, the PBD has been employed by \cite{dohler2020controlling} to develop FDX-controlling procedures for multiple testing with heterogeneous units. This section discusses the connection of the PBD to the \emph{lfdr} procedure.

Consider the two-group model \eqref{eq:mixmodel}. If we do not have any prior knowledge, then each null hypothesis is true (or false) with the same probability $1-\pi$ (or $\pi$).  However, conditional on the observables $\Z = (Z_i)_{i \in [m]}$, the unknown states of the hypotheses $\theta_i$ marginally follow Bernoulli distributions with heterogeneous success probabilities, namely, $\pi_i=\PP(\theta_i = 0| \Z)$. In the situation where the joint density $f(\Z|\theta_1, \cdots, \theta_m)$ can be factorized into the product of marginal densities. That is,
\[
f(\Z|\theta_1, \cdots, \theta_m)=\prod_{i\in[m]}f(Z_i|\theta_i), 
\]
and $\PP(\theta_i = 0| \Z)$ reduces to the \lfdr \ statistic $\PP(\theta_i = 0| Z_i)$. Furthermore, if $\theta_i$ are independent, then the partial sum $\sum_{i\in S}(1-\theta_i)$, where $S\subset [m]$ is an arbitrary nonempty subset of $[m]$, is a PBD random variable. 

In the oracle procedure (Procedure \ref{proc:proposedOR}), we rank hypotheses by the \emph{lfdr}; hence the subset $S$, which consists of the hypotheses we reject, is data-dependent. Then, the index set of the partial sum should be denoted by $S_{\Z}$, which critically depends on the thresholding (aka  ``selection") step of Procedure \ref{proc:proposedOR}. An important concern is that this may lead to a selection bias, but 
the next lemma proves that the selection based on \emph{lfdr} does not distort the Poisson binomial distribution of the partial sum $\sum (1-\theta_i) \mathcal{I}_{i \in S_{\Z}}$. Our proof essentially corroborates the general principle that selective inference based on Bayes rule is unbiased conditional on a selection event. The high-level idea is explained in earlier works  \citep[e.g.,][]{dawid1994selection}. 

\begin{lemma} (No Selection Bias) Ranking by the \emph{lfdr} does not alter the conditional distribution of the partial sums $\sum (1-\theta_i) \mathcal{I}_{i \in S_{\Z}}$, where $S_{\Z}$ denotes \textit{any} index set under consideration after viewing the data $\Z$.
\end{lemma}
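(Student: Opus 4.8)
The plan is to reduce the statement to two elementary facts: (i) the selected set $S_{\Z}$ is a deterministic function of $\Z$, and (ii) conditionally on $\Z$ the latent states $\theta_i$ are still independent with the lfdr values as their conditional success probabilities. First I would record the observation that each oracle statistic $T_i^{OR} = (1-\pi)f_0(Z_i)/f(Z_i)$ is a fixed measurable function of $Z_i$ alone (since $\pi$, $f_0$, $f$ are known). Hence the ascending ranking $(T^{OR}_{(i)})_{i\in[m]}$, the cutoff $K$ of Procedure~\ref{proc:proposedOR}, and therefore every index set ``under consideration after viewing the data'' — in particular $\mathcal R_K$, $\mathcal R_{K+1}$, or any $S_{\Z}$ built from the lfdr ordering — is $\sigma(\Z)$-measurable. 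Consequently, conditioning on $\Z=\z$ fixes $S_{\Z}$ to a non-random subset $S_{\z}\subseteq[m]$. Making this admissibility requirement explicit (an honest selection rule may look only at $\Z$, never at $\bm\theta$) is really the conceptual heart of the lemma, and I would state it up front.

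Next I would invoke the conditional independence structure of the two-group model. Under the factorization $f(\Z\mid\theta_1,\dots,\theta_m)=\prod_{i\in[m]}f(Z_i\mid\theta_i)$ together with the i.i.d.\ Bernoulli$(\pi)$ prior on $\theta_i$, a one-line Bayes computation shows that, conditional on $\Z=\z$, the $\theta_i$ are mutually independent with $\PP(\theta_i=0\mid\Z=\z)=\PP(\theta_i=0\mid Z_i=z_i)=T_i^{OR}(z_i)$. Equivalently, the failure indicators $(1-\theta_i)$, given $\{\Z=\z\}$, are independent $\mathrm{Bernoulli}\big(T_i^{OR}(z_i)\big)$ across $i\in[m]$. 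This is exactly the ``unselected'' model one would write down before any thresholding.

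Combining the two, fix $\z$; since $S_{\Z}=S_{\z}$ is non-random given $\Z=\z$,
\[
\sum_i (1-\theta_i)\,\mathcal{I}_{i\in S_{\Z}} \;\Big|\; \{\Z=\z\} \;=\; \sum_{i\in S_{\z}} (1-\theta_i)\;\Big|\;\{\Z=\z\},
\]
and the right-hand side is a sum of independent Bernoulli variables with parameters $\{T_i^{OR}(z_i):i\in S_{\z}\}$, i.e.\ a $\PBD\big(|S_{\z}|,\{T_i^{OR}(z_i):i\in S_{\z}\}\big)$ variable — precisely the distribution one would attribute to a \emph{pre-specified}, data-independent set carrying those same lfdr values. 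Thus the data-driven selection contributes nothing beyond the value of $\z$ already being conditioned on, which is the ``no selection bias'' claim; this is the instantiation in our setting of the general principle (cf.\ \citealp{dawid1994selection}) that Bayes-rule inference is unbiased conditional on the selection event.

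I do not expect a hard computation anywhere; the only point needing care — and worth a short remark — is that the assertion is about the law \emph{conditional on} $\Z$. Marginally, $\sum_i (1-\theta_i)\mathcal{I}_{i\in S_{\Z}}$ is a $\z$-mixture of PBDs and is generally not itself a PBD, but the conditional version is all that is used in \eqref{tail-prob} and in the definition of the cutoff $K$ and of the randomization probability in Procedure~\ref{proc:proposedOR}. So the ``main obstacle'' is a modeling/measurability point rather than an analytic one: once one insists that the selection rule be $\Z$-measurable, the proof is the two-step measurability-plus-conditional-independence argument above.
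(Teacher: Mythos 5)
Your proposal is correct and rests on the same two ingredients as the paper's own proof: the $\sigma(\Z)$-measurability of the selected set (so that conditioning on $\Z$ makes $S_{\Z}$ non-random) and the conditional independence of the $\theta_i$ given $\Z$ under the factorization assumption, with conditional success probabilities equal to the \emph{lfdr} values. The paper merely packages the same argument as an explicit verification that expectations of products of the variables $R_i=(1-\theta_i)\mathcal{I}_{i\in S_{\Z}}$ factorize over arbitrary fixed index sets, whereas you condition on $\Z=\z$ directly; the content is identical.
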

\begin{proof}  Define random variables ${\blue R}_i := (1-\theta_i) \mathcal{I}_{i \in S_{\Z}}$, where $S_{\Z}\subset [m]$ is determined by the observables $\Z = (Z_i)_{i \in [m]}$.  Consider any nonempty fixed subset of indices denoted by $S_{\mathcal{P}}$. Consider two situations. 
\begin{itemize}
 \item (i) If $S_{\mathcal{P}} \nsubseteq S_{\Z}$, then there exists an index $i_0 \in S_{\mathcal{P}} \cap S_{\Z}^c$. We have
 $$
\EE\left\{\prod_{i \in S_{\mathcal{P}}} R_i | \Z\right\} = 0 = \EE \left(R_{i_0}|\Z\right) \cdot \EE\left(\prod_{i \in S_{\mathcal{P}} \backslash i_0} R_i | \Z\right).
$$ 
 
 \item (ii) If $S_{\mathcal{P}} \subseteq S_{\Z}$, then we have 
 $$
 \EE[\prod_{i \in S_{\mathcal{P}}} R_i | \Z]  = \EE[\prod_{i \in S_{\mathcal{P}}} {\blue (1-\theta_i)} | \Z] = \prod_{i \in S_{\mathcal{P}}} \EE [{\blue (1-\theta_i)} | \Z] = \prod_{i \in S_{\mathcal{P}}} \EE [R_i | \Z],
 $$ 
\end{itemize}

{\blue where the middle equality is due to factorization of the joint density and independence of the $\theta_i$s.} Therefore conditional on $\Z$, the $R_i$'s are identically zero if $i \notin S_{\Z}$ and independently distributed as Bernoulli random variables with the expectation of $lfdr(Z_i)$ if $i \in S_{\Z}$.
\end{proof}

\subsection{Properties of the Oracle Procedure}\label{sec:or-prop}

The oracle procedure can be split into two steps, ranking and thresholding, which shape the procedure's properties. For the ranking step, which orders the hypotheses from the most significant to the least significant according to a significant index, a desirable property is optimal ranking.  The optimal ranking with respect to a significant index $T$ indicates that the thresholding rule along the $T$ ranking has equal or larger power than any other rule at the same error rate. 

For the thresholding step, which  chooses a cutoff along the ranking to control the desired error rate, a desirable property is error-rate control at the target level. In this section, {\blue assuming that lfdr ranking does not introduce any selection bias (see Lemma 1)}, we first show that the FDX level $(\gamma, \alpha)$ is exhausted by our proposed oracle procedure (thereby achieving the second property), and then establish that the \lfdr \ ranking is optimal (thereby achieving the first property).   
 
\begin{proposition} (Exact Control of FDX).
	Procedure~\ref{proc:proposedOR} controls the FDX at level $(\gamma, \alpha)$.
\end{proposition}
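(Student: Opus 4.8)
The plan is to show that Procedure~\ref{proc:proposedOR} attains the FDX bound exactly by a conditioning argument on the data $\Z$, using Lemma~1 to identify the conditional law of the false-rejection count as a Poisson binomial, and then a monotonicity/continuity argument to justify the randomization step at rejection $K+1$.

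First I would condition on $\Z$. Given $\Z$, the \lfdr\ ranking is fixed, so the nested rejection sets $\mathcal R_1 \subset \mathcal R_2 \subset \cdots$ are deterministic, and by Lemma~1 the partial sum $V_k := \sum_{i \in \mathcal R_k}(1-\theta_i)$ is, conditional on $\Z$, distributed as $\PBD\!\left(k, \bm p^{(k)}\right)$ with $\bm p^{(k)} = \left(T^{OR}_{(1)}, \ldots, T^{OR}_{(k)}\right)$. Hence $g(k) := \PP_{\theta|\Z}\{V_k > \gamma k\} = \PP\!\left(\PBD(k,\bm p^{(k)}) > \gamma k\right)$ is exactly the quantity thresholded in Step 2, and $K$ is the largest $k$ with $g(k) \le \alpha$. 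Next I would argue that $g$ is nondecreasing in $k$: adding the $(k+1)$st hypothesis (which has the largest \lfdr\ among the rejected set, i.e.\ the largest success probability $T^{OR}_{(k+1)}$) enlarges $V_k$ by a Bernoulli increment and also raises the threshold from $\gamma k$ to $\gamma(k+1)$; the cleanest way to see $g(k+1)\ge g(k)$ is to note $V_{k+1} \ge V_k$ pointwise while $\lceil \gamma k\rceil$ to $\lceil \gamma(k+1)\rceil$ is controlled — actually the monotonicity of $g$ is not literally needed, only that $g(K)\le\alpha < g(K+1)$, which holds by the very definition of $K$ as the maximum (assuming $K < m$; if $K = m$ there is nothing to randomize and FDX~$\le\alpha$ trivially). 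So conditionally on $\Z$ we have $g(K) \le \alpha < g(K+1)$, and the randomization probability in \eqref{rand:prob}, call it $\rho = \rho(\Z) = \bigl(\alpha - g(K)\bigr)/\bigl(g(K+1) - g(K)\bigr)$, is a well-defined number in $(0,1]$.

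Then I would compute the conditional FDX of the randomized procedure. With probability $1-\rho$ the procedure rejects $\mathcal R_K$ and with probability $\rho$ it rejects $\mathcal R_{K+1}$ (the randomization coin being independent of $\theta$ given $\Z$), so
\begin{equation*}
\PP_{\theta|\Z}(\mathrm{FDP} > \gamma) = (1-\rho)\,g(K) + \rho\, g(K+1) = g(K) + \rho\bigl(g(K+1) - g(K)\bigr) = \alpha.
\end{equation*}
Here I must double-check the edge case: when the FDP threshold $\gamma(K+1)$ could equal $\gamma K$ in the count sense only if $\gamma(K+1)$ and $\gamma K$ straddle no integer, in which case $g(K+1) = g(K)$ and the denominator of \eqref{rand:prob} vanishes; in that degenerate situation $g(K) = g(K+1) \le \alpha$ and one simply rejects $\mathcal R_{K+1}$ (or declines to randomize) and the conditional FDX is still $\le \alpha$ — this boundary case needs a sentence. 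Finally, I would take expectations over $\Z$: since $\PP_{\theta|\Z}(\mathrm{FDP} > \gamma) = \alpha$ for (almost) every $\Z$ (or $\le\alpha$ in the degenerate/$K=m$ cases), the tower property gives $\PP(\mathrm{FDP} > \gamma) = \EE_{\Z}\bigl[\PP_{\theta|\Z}(\mathrm{FDP} > \gamma)\bigr] \le \alpha$, which is the claim.

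The main obstacle is not the conditioning or the tower property — those are routine once Lemma~1 is in hand — but rather being careful that the quantities in Step 2 (a frequentist PBD tail over the fixed success vector $\bm p^{(k)}$) and in Step 3 (the data-conditional probability $\PP_{\theta|\Z}\{\sum_{i\in\mathcal R_k}(1-\theta_i) > \gamma k\}$) genuinely coincide; this is exactly what Lemma~1 delivers, since it certifies that selecting $\mathcal R_k$ by \lfdr\ ranking leaves the $(1-\theta_i)$ conditionally independent Bernoulli$(T^{OR}_{(i)})$. The secondary subtlety is the well-definedness of the randomization probability: I need $0 \le \alpha - g(K)$ (immediate from $K$'s definition) and $\alpha - g(K) \le g(K+1) - g(K)$, i.e.\ $g(K+1) \ge \alpha$, which is where $K$ being the \emph{maximal} such index is used, and the handling of the degenerate case $g(K+1) = g(K)$ and of $K = m$. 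Everything else is bookkeeping.
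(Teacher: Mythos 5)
Your proof is correct and follows essentially the same route as the paper's: condition on $\Z$, use Lemma~1 to identify the conditional law of the false-rejection count as a Poisson binomial so that the Step~2 tail probabilities coincide with $\PP_{\theta|\Z}(\mathrm{FDP}>\gamma)$, observe that the randomization in Step~3 makes this conditional exceedance probability exactly $\alpha$, and finish with the tower property. The only remark is that the degenerate case you flag, $g(K+1)=g(K)$, cannot actually occur, since $g(K+1)=g(K)\le\alpha$ would contradict the maximality of $K$; apart from that, your (more detailed) edge-case bookkeeping matches the paper's terser argument.
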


\begin{proof} By design, Procedure 1 ensures $\PP_{\theta|\Z} \left\{\sum_{i\in\mathcal R_K} (1-\theta_i) > \gamma K\right\} \leq \alpha$. Consider an independent arbiter $U$ (i.e. a weighted coin flip), which is chosen to favor one more rejection with probability given in \eqref{rand:prob}. 
This arbitrary randomization at Step 3 in Procedure~\ref{proc:proposedOR} ensures that 
\[
E_{(\theta, U)|\Z} \left[\mathbb{I}_{\sum_i (1-\theta_i) \delta_i^{*} > \gamma \sum_i \delta_i^{*}} \right] = \alpha, 
\]
where $\delta_i^{*}$ is the decision rule determined by the data $\Z$ and the independent arbiter. Taking a further expectation with respect to $\Z$ completes the proof. 
\end{proof}

\begin{proposition} (Optimal Ranking)
In the i.i.d.~two-group model, Procedure~\ref{proc:proposedOR} has the best ranking almost surely in the sense that for any other decision rule at FDX-level $(\gamma, \alpha)$, we can always find an lfdr-based thresholding rule at the same level that has a higher or equal ETP. 
\label{Prop:OR}
\end{proposition}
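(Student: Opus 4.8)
The plan is to show that the \emph{lfdr} ranking dominates any other ranking by an exchange/rearrangement argument conditional on the data $\Z$. Fix an arbitrary decision rule $\bm\delta$ that satisfies the FDX constraint \eqref{eq:fdx}. Working conditionally on $\Z$, let $R(\bm\delta)=\sum_i\delta_i$ be its (possibly data-dependent) number of rejections and consider the \emph{lfdr}-thresholding rule $\bm\delta^{\mathrm{lfdr}}$ that rejects the $R(\bm\delta)$ hypotheses with the smallest $T^{OR}_{(i)}$. The first claim is that, conditional on $\Z$, the \emph{lfdr} rule has at least as small a conditional FDX as $\bm\delta$: since both reject the same number $k$ of hypotheses, by Lemma~1 the partial sum $\sum_{i\in\mathcal R}(1-\theta_i)$ is $\PBD(k,\bm p)$ with $\bm p$ the collection of \emph{lfdr} values of the selected set, and replacing any selected index by an unselected one with a \emph{smaller} \emph{lfdr} stochastically decreases this Poisson-binomial sum (a Bernoulli with smaller success probability is stochastically smaller, and sums of independent Bernoullis are monotone in each coordinate in the usual stochastic order). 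Hence $\PP_{\theta\mid\Z}\{\sum_{i\in\mathcal R^{\mathrm{lfdr}}_k}(1-\theta_i)>\gamma k\}\le \PP_{\theta\mid\Z}\{\sum_{i\in\mathcal R_k}(1-\theta_i)>\gamma k\}$, and taking expectations over $\Z$ shows $\bm\delta^{\mathrm{lfdr}}$ is also feasible.

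Next I would compare power. Conditional on $\Z$, the expected number of true positives of any rule rejecting a set $\mathcal R$ is $\sum_{i\in\mathcal R}\PP(\theta_i=1\mid Z_i)=\sum_{i\in\mathcal R}(1-T^{OR}_i)$, which for a fixed cardinality $k$ is maximized exactly by taking the $k$ indices with the smallest $T^{OR}_i$ — i.e.\ by $\bm\delta^{\mathrm{lfdr}}$. So among all rules with the same conditional number of rejections, \emph{lfdr}-thresholding has the largest conditional ETP. Combining the two facts: starting from an arbitrary feasible $\bm\delta$, the \emph{lfdr}-thresholding rule at the same (conditional) rejection count is feasible and has ETP at least that of $\bm\delta$; and since Procedure~\ref{proc:proposedOR} chooses $K$ (with the randomization at step~3) to be the \emph{largest} feasible cutoff along the \emph{lfdr} ranking, its ETP is at least that of $\bm\delta^{\mathrm{lfdr}}$, hence at least that of $\bm\delta$. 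Taking the outer expectation over $\Z$ throughout yields the unconditional statement, which is the ``almost surely best ranking'' conclusion.

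I expect the main obstacle to be handling the fact that the competing rule's number of rejections is itself random and $\Z$-dependent, so the comparison must be made pointwise in $\Z$ and only then integrated; one must be careful that ``matching the rejection count of $\bm\delta$'' produces a genuine \emph{lfdr}-thresholding rule of the type covered by the feasibility argument, and that the randomization step of Procedure~\ref{proc:proposedOR} is what makes $K$ truly maximal (so that no feasible \emph{lfdr} cutoff can beat it). A secondary technical point is the stochastic-monotonicity step: I would either cite the standard coupling for the usual stochastic order of Poisson-binomial sums, or construct the coupling explicitly (couple each Bernoulli pair so the smaller-parameter one is dominated), and then note that $\{S>\gamma k\}$ is an upper set so its probability is monotone. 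Ties in the \emph{lfdr} values occur with probability zero under the assumed continuity of $f_0,f$, so the ranking is almost surely well defined, which is the only place the ``almost surely'' qualifier is needed.
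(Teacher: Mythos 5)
Your proposal is correct and follows essentially the same route as the paper's proof: both arguments rest on the rearrangement fact that the conditional ETP of a rejection set $\mathcal R$ is $\sum_{i\in\mathcal R}(1-T_i^{OR})$ together with the stochastic monotonicity of the Poisson-binomial tail in each Bernoulli parameter, the paper phrasing this as a proof by contradiction via a single pairwise swap of a misordered pair while you construct the dominating lfdr-thresholding rule directly at the matched rejection count. The only caveat is your closing claim that Procedure~\ref{proc:proposedOR}'s $K$ must further dominate $\bm\delta^{\mathrm{lfdr}}$: this is not needed for the stated proposition (which only asks for \emph{some} dominating lfdr-based thresholding rule) and would require more care, since $\bm\delta^{\mathrm{lfdr}}$ inherits only average-over-$\Z$ FDX control from $\bm\delta$, whereas Procedure~\ref{proc:proposedOR}'s cutoff is defined by a pointwise conditional constraint.
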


The proof of Proposition 2 is presented in the supplementary material.

\subsection{Computational Shortcuts}\label{sec:computational-shortcuts}\label{sec:shortcuts}

Procedure~\ref{proc:proposedOR} is a step-up procedure, which starts by computing the tail probability of the PBD for all tests under consideration. At each progressive step, the set of tests under consideration decreases. However, if we have a massive number of tests to begin with, then Procedure~\ref{proc:proposedOR} can be computationally intensive. To facilitate fast implementation in large-scale testing problems, we develop some computational shortcuts and modify Procedure \ref{proc:proposedOR} as follows. 

\begin{proc} \label{proc:proposedfaster} \phantom{}
Consider the lfdr statistics $(T^{OR}_i)_{i\in[m]}$ defined in \eqref{fdx:def:tor}.  The modified FDX procedure consists of 4 steps that successively narrow down the focus of the search. 
	\begin{enumerate}
	\item Order the lfdr statistics in ascending order. Denote the ordered statistics by $(T^{OR}_{(i)})_{i\in[m]}$. 
		\item Reject up to $K_1 := \max \left[k \in [m]:~\EE_{\theta|\Z} \left\{\sum_{i=1}^k (1-\theta_i) \right\} \leq k \cdot \{\alpha + \gamma (1 - \alpha)\}\right]$.
		\item Reject up to $K_2 := \max \{k \in [K_1]: \PP \left(Y > \gamma k \right) \leq \alpha\}$ where,\\
		$Y \sim \mbox{Binomial}\left(k, (\prod_1^k T_{(i)})^{1/k}\right)$.
		\item Denote $\bm{p^{(k)}} = (T^{OR}_{(1)}, \hdots, T^{OR}_{(k)})$. Reject only up to 
		$$K := \max \left\{k \in [K_2]: \PP\left(\PBD(k,  \bm{p^{(k)}})>\gamma k \right) \leq \alpha \right\}. $$ 		
	\end{enumerate}
\end{proc}

First, note that relative to Procedure~\ref{proc:proposedOR}, Procedure~\ref{proc:proposedfaster} has two additional steps (i.e., Step 2 and Step 3). We progressively reduce the number of hypotheses under consideration so that the computationally intensive Step 4 can be conducted on a much smaller subset of hypotheses.  This greatly reduces the computational cost because Step 2 only involves computing cumulative average \emph{lfdr} over progressively smaller sets of hypotheses, and Step 3 involves binomial calculations rather than Poisson binomial calculations over sets of hypotheses which are further whittled down.

Next we justify the computational shortcuts by showing that Procedures~\ref{proc:proposedOR} and \ref{proc:proposedfaster} are equivalent. The main idea is that both Steps 2 and 3 are step-up procedures in themselves:~a testing unit's failure to meet their criteria guarantees that the testing unit will also fail the tail probability criteria in Step 4. 

Step 2 is equivalent to the adaptive $z$-value procedure for FDR control in \cite{SunCai07} at the level $\alpha+\gamma(1-\alpha)$. We claim that $\PP_{\theta|\Z} \left\{\sum_{i=1}^k (1-\theta_i) > \gamma k \right\} \leq \alpha$
implies
\[
\EE_{\theta|\Z} \left\{\sum_{i=1}^k (1-\theta_i)\right\} \leq k \cdot \{\alpha + \gamma (1 - \alpha)\}. 
\]
To see this, rewrite the above expectation by partitioning the event into the sub-event where the number of false positives is greater than $\gamma k$ and its complement event:
$$
\EE_{\theta|\Z} \left(\mathcal{I}_{\sum_{i} (1-\theta_i) > \gamma k} \sum_{i = 1}^k (1-\theta_i)\right)  +  \EE_{\theta|\Z} \left(\mathcal{I}_{\sum_{i} (1-\theta_i) \leq \gamma k} \sum_{i=1}^k (1-\theta_i)  \right).
$$
The sum in the first expectation is bounded by $k$, and the sum in the second expectation, the number of false positives, is bounded by $\gamma k$. {\blue Assume that there is an $\alpha'$, such that $P_{\theta|Z_1 \dots Z_n} \{ \sum_{i = 1}^k (1-\theta_i) > \gamma k \} = \alpha'$, then the sum of the two expectations is bounded by $k \cdot \{\alpha' + \gamma (1 - \alpha')\} = k \gamma + \alpha' \cdot (k - \gamma k)$. Since $ (k - \gamma k) \geq 0$  and $\alpha' \leq \alpha$, because   $P_{\theta|Z_1 \dots Z_n} \{ \sum_{i = 1}^k (1-\theta_i) > \gamma k \} \leq \alpha$, then $k \cdot \{\alpha' + \gamma (1 - \alpha')\} \leq k \cdot \{\alpha + \gamma (1 - \alpha)\}$, which becomes the upper bound.

  }Thus if the condition in Step 4 holds, then the condition in Step 2 also holds. It follows that if the condition in Step 2 fails, then the condition in Step 4 also fails. Hence the reduction of the set of hypotheses produced by Step 2 is legitimate, as it only eliminates cases in which the condition of Step 4 would fail. Step 2 ends when we find the largest index for which the condition does not fail, and we pass $\{\mathcal{H}_{1}$, \dots, $\mathcal{H}_{K_1}\}$ to the next step in the procedure.

In Step 3, we apply a useful result from \cite{shaked2007stochastic}. Concretely, consider $n$ independent Binomial random variables $X_i \sim B(1, p_i)$ with $i \in\{1, \dots, n\}$, and 
$$Y \sim \mbox{Binomial}\left(n, (\Pi_{i=1}^n p_i)^{1/n}\right).$$ 
Then $Y$ is stochastically smaller than $\sum_{i\in[n]} X_i$. This implies that if the condition of Step 3 fails, i.e.,  $$\PP \left\{\mbox{Binomial}\left(k, (\prod_i^k T^{OR}_{(i)})^{1/k}\right) > \gamma k \right\} > \alpha,$$ then the condition in Step 4 also fails (i.e., {\blue $\PP\left(\PBD(k, \bm{p^{(k)}}) >\gamma k \right) > \alpha$}), where $\bm{p^{(k)}} = (T^{OR}_{(1)}, \hdots, T^{OR}_{(k)})$. Step 3 is also a step-up search, which will end at the first index for which the condition does not fail, denoted by $K_2$. 

Finally, Step 4 of Procedure~2 is equivalent to Step 2 of Procedure~1, but it is applied to a set with $K_2$ hypotheses instead of the initial set of $m$ hypotheses. The equivalence between  Procedures~\ref{proc:proposedOR} and \ref{proc:proposedfaster} is thus established. The computational advantage of Procedure \ref{proc:proposedfaster} is illustrated in the supplementary material.


\section{Implementation and Related Issues}\label{sec:Implementation}

\subsection{Estimation of the \emph{lfdr}}

So far, we have assumed that the \emph{lfdr} is known. However, it is unknown in practice and must be estimated from data. There are several approaches to estimating the \emph{lfdr}, which differ in how the null density $f_0$, the null proportion $(1-\pi)$, and the mixture density $f$ are computed. We use the model-based clustering approach of \cite{fraley2002model} to estimate the \emph{lfdr} in our simulation experiments. The approach is available in the R package \textit{mclust} (version of November 20, 2020). The package first clusters the data into $G$ groups then computes for each $z$-value the probability that it belongs to a specific group $(\pi_g, g\in [G])$, and finally estimate the corresponding cluster-wise densities $(f_g: g\in[G])$. The group with the highest $\pi_g$ will be defined as the null group, and we denote the corresponding proportion and density as $\pi_0$ and $f_0$. In our numerical studies, $f_0$ always corresponds to the cluster whose center is close to zero. The \emph{lfdr} is then computed as $$lfdr = \frac{\pi_0 f_0}{\sum_{g \in [G]} \pi_g f_g}.$$ Alternatively, the denominator can also be computed as the mixture density $f(\cdot)$ using a non-parametric kernel density estimator.

Another method for computing the unknown \emph{lfdr} statistic, particularly in applications where the data generating process is blurry or unknown, is through the R package \textit{locfdr} based on \cite{efron2004large,efron2008microarrays,efron2009empirical}. We use this approach in the real data application presented in Section~\ref{Sec.Application}. Other methods include the adaptive $z$-value approach discussed in \cite{SunCai07}, for which code is available at \url{http://stat.wharton.upenn.edu/~tcai/paper/html/FDR.html}, and the robust error-correction method recently proposed by \cite{roquain2020false}. 



\subsection{Dependencies and Exchangeability}\label{Sec.Exchangeability}

The proof of the optimality of the lfdr ranking (Proposition~\ref{Prop:OR}) requires that the observables $(\Z=Z_{i\in[m]})$ be  independent. This section investigates the optimality issue under dependence. We first show that ranking by \emph{lfdr} is still optimal when hypotheses are jointly Gaussian and exchangeable  (i.e., equal-variances and equal-covariances):~$\Z|\pmb \theta \sim \mathcal{N} (\mu \ \pmb \theta, \Sigma$), where $\mu$ is the common mean, and the correlation matrix corresponding to $\Sigma$ is equi-correlated: 
{\blue
\[
\Sigma  = 
\begin{bmatrix}
1 & \rho & \ldots&\rho\\
\rho & 1 & \ddots&\vdots\\
\vdots & \ddots & \ddots &\rho\\
\rho&\ldots&\rho & 1
\end{bmatrix}
\]
where $\rho \in [0,1)$. This model can rewritten as (\cite{carpentier2021estimating})
\[
Z_i = \mu \theta_i + \rho^{1/2}W + (1-\rho)^{1/2}\zeta_i,
\]
for $1 \leq i \leq m$ and $W, \zeta_i$ are independent $N(0, 1)$. Then conditional on W the individual $Z_i$ are independent and the null mean and variance are $\rho^{1/2}W$ and $1-\rho > 0$.} In supplementary Section \ref{Sec.Counterexample}, by contrast, if hypotheses are non-exchangeable, we provide a counter-example in which the optimality of the \lfdr \ ranking does not hold (the numerical results are provided in Table \ref{Tbl.Counterexample}).

\begin{proposition} (Optimal Ranking for Exchangeable Gaussian Observations) When $\Z|\pmb \theta \sim \mathcal{N} (\mu \ \pmb \theta, \Sigma$), Procedure~\ref{proc:proposedOR} has the best ranking for almost all sample points:~For any decision rule with FDX-level $(\gamma, \alpha)$ we can find an lfdr-based thresholding rule at the same level that has a higher or equal ETP.
\end{proposition}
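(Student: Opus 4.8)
The plan is to reduce the exchangeable Gaussian case to the independent case already handled in Proposition~\ref{Prop:OR} by conditioning on the common latent factor $W$. Using the representation $Z_i = \mu\theta_i + \rho^{1/2}W + (1-\rho)^{1/2}\zeta_i$ from \citet{carpentier2021estimating}, once we fix $W=w$ the observations $Z_i$ become independent across $i$, with $\theta_i$ still i.i.d.\ Bernoulli$(\pi)$ and with $Z_i\mid\theta_i$ drawn from a two-group model whose null component is $\mathcal N(\rho^{1/2}w,\,1-\rho)$ and whose alternative component is $\mathcal N(\mu+\rho^{1/2}w,\,1-\rho)$. In other words, conditionally on $W=w$ we are exactly in the i.i.d.\ two-group setting of Proposition~\ref{Prop:OR}, only with a shifted null mean and a rescaled variance. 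The key observation is that both FDX and ETP are defined through expectations and probabilities that factor through this conditioning, so it suffices to establish the ranking optimality $W$-wise and then integrate.

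First I would verify that the conditional \emph{lfdr} given $W=w$, namely $T_i^{OR}(w)=\PP(\theta_i=0\mid Z_i,W=w)$, produces the \emph{same ordering} of the hypotheses as the marginal \emph{lfdr} $T_i^{OR}$. This is the crux: the oracle procedure in Procedure~\ref{proc:proposedOR} ranks by the marginal $T_i^{OR}$, but the conditional optimality argument delivers optimality only for rankings by $T_i^{OR}(w)$. For the equicorrelated Gaussian model both quantities are monotone functions of $|Z_i|$ (or of $Z_i$ in the one-sided shift case), because the likelihood ratio $f_1/f_0$ for two normals with a common variance and means $0$ and $\mu$ is monotone in the observation, and shifting both means by $\rho^{1/2}w$ and rescaling the variance preserves that monotonicity in $Z_i$. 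Hence the ranking by $T_i^{OR}$ and the ranking by $T_i^{OR}(w)$ coincide for every $w$, so the oracle procedure's ranking step is simultaneously optimal in every conditional world.

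Next I would run the conditional optimality argument. Fix any competing decision rule $\bm\delta$ with FDX-level $(\gamma,\alpha)$. Its conditional FDX given $W=w$, call it $\alpha(w)=\PP(\mathrm{FDP}>\gamma\mid W=w)$, satisfies $\EE_W[\alpha(w)]\le\alpha$. Conditionally on $W=w$, the situation is the i.i.d.\ two-group model of Proposition~\ref{Prop:OR}, so there is an \emph{lfdr}-thresholding rule $\bm\delta^\star_w$ (thresholding the common ranking, which by the previous paragraph is the $T_i^{OR}$ ranking) with conditional FDX $\le\alpha(w)$ and conditional ETP at least that of $\bm\delta$. One then has to exhibit a \emph{single} \emph{lfdr}-thresholding rule — i.e.\ one whose cutoff along the $T_i^{OR}$ ranking is itself a function of $W$ — that dominates. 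Since the competitor may also use $W$, and $W$ is measurable with respect to the data in this model, it is legitimate to let the dominating rule's cutoff depend on $w$; collecting the $\bm\delta^\star_w$ into a $W$-measurable rule $\bm\delta^\star$ gives $\mathrm{ETP}(\bm\delta^\star)=\EE_W[\mathrm{ETP}(\bm\delta^\star_w)]\ge\EE_W[\mathrm{ETP}_w(\bm\delta)]=\mathrm{ETP}(\bm\delta)$ and $\mathrm{FDX}(\bm\delta^\star)=\EE_W[\alpha^\star(w)]\le\EE_W[\alpha(w)]\le\alpha$. That establishes the claim, since Procedure~\ref{proc:proposedOR} is by construction the most powerful rule of this thresholding form at level $(\gamma,\alpha)$, up to the randomization in its last step.

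I expect the main obstacle to be the subtlety that Proposition~\ref{Prop:OR}'s notion of ``lfdr-based thresholding rule'' is a global cutoff along the \emph{marginal} ranking, whereas the conditioning argument naturally produces a cutoff that varies with $W$. One must argue that allowing a $W$-dependent cutoff is still within the admissible class (because $W$ is recoverable from $\Z$ in the equicorrelated model, e.g.\ via $m^{-1}\sum_i Z_i \to \rho^{1/2}W$ under the relevant asymptotics, or more carefully through the sufficiency structure of the model), and, conversely, that the competing rule $\bm\delta$ is also free to use $W$, so the comparison is fair. A secondary technical point is confirming the ``no selection bias'' property (Lemma~1) and the PBD machinery still apply conditionally on $W$: this is immediate, since conditionally on $W$ the $\theta_i$ are independent and the joint density factorizes, which is exactly the hypothesis Lemma~1 requires. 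The measure-zero exceptional set in ``almost all sample points'' can be absorbed by a Fubini argument over $w$.
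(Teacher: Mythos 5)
There is a genuine gap in your construction of the dominating rule. Your plan is to condition on the latent factor $W$, apply Proposition~\ref{Prop:OR} in each conditional world to get a dominating lfdr-thresholding rule $\bm\delta^\star_w$, and then ``collect'' these into a single rule $\bm\delta^\star$ whose cutoff depends on $w$. But $W$ is \emph{not} measurable with respect to the data in this model for finite $m$: given $\Z$, the posterior of $W$ is a nondegenerate continuous distribution (a mixture over the $2^m$ configurations of $\pmb\theta$), so a cutoff $K(w,\Z)$ is not a function of the observables and the collected $\bm\delta^\star$ is not an admissible decision rule. Your proposed escapes do not close this: asymptotic recovery of $W$ via $m^{-1}\sum_i Z_i$ is irrelevant to the finite-$m$, almost-sure statement being proved, and there is no sufficiency argument that makes $W$ a function of $\Z$. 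Likewise, the claim that ``the competing rule is also free to use $W$'' is backwards — the competitor is constrained to be $\Z$-measurable, and what the proposition demands is an implementable lfdr-thresholding rule that dominates it, which your construction does not produce.

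The underlying idea of using the representation $Z_i=\mu\theta_i+\rho^{1/2}W+(1-\rho)^{1/2}\zeta_i$ to restore conditional independence is sound, and your observation that the conditional-on-$w$ lfdr and the marginal lfdr induce the same (monotone-in-$z$) ordering is correct and useful; but the conditioning should be applied to the two \emph{swap inequalities}, not to the whole rule. That is, one should argue as in the proof of Proposition~\ref{Prop:OR}: for a fixed $\Z$ and a discordant pair, swapping increases $\sum_i\{1-\PP(\theta_i=0\mid\Z)\}\delta_i$ and does not increase $\PP_{\theta\mid\Z}\{\sum_i(1-\theta_i)\delta_i>\gamma k_0\}$; the second inequality can be verified by first conditioning on $(W,\Z)$ — where the $\theta_i$ are independent and the PBD stochastic-dominance argument applies, using your ordering observation — and then integrating over $W\mid\Z$. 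The paper takes a different route to the same two inequalities: it computes the ratio $P(\Z\mid\theta_1=0,\theta_2=1,\pmb\theta_{\{-1,-2\}})/P(\Z\mid\theta_1=1,\theta_2=0,\pmb\theta_{\{-1,-2\}})$ directly from the joint Gaussian density, deduces that the joint posterior ordering $\PP(\theta_i=0\mid\Z)$ agrees with the $z$-score (hence marginal lfdr) ordering, and shows validity is preserved under swapping by a term-by-term comparison of posterior probabilities over configurations of $\pmb\theta_{\{-1,-2\}}$, with no reference to the latent factor. Either repair keeps the comparison within the class of $\Z$-measurable rules, which is what your version loses.
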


The proof of Proposition 3 is presented in the supplementary material.

After ranking the hypotheses according to the \emph{lfdr}, we must choose a threshold to control the FDX. Procedure 2 requires that hypotheses be independent. Hence if we apply the method to a data set where the test statistics are dependent, our FDX procedure may not provide the optimal threshold. If the source or nature of the dependency {\blue structure is known}, one could still compute a proper posterior probability by evaluating the probabilities of all possible combinations of $\pmb \theta$ conditional on data. However, this evaluation can be computationally prohibitive. The computational burden can be overcome if we assume that individual $z$-values are independent conditionally on the unknown location and scale of the data-generating process. Our strategy is to first estimate the hyper-parameters, thus obtaining an estimated ``null'' distribution, and then continue as if the tests were independent. We return to this point in Section~\ref{Sec.Dependencies}. 

\section{Numerical Experiments}\label{sec:numeric}

We present several numerical examples to highlight the properties of Procedure 2 and illustrate how it compares to other competing methods. We start with a setup based on the two-group model \eqref{eq:mixmodel} that assumes independence, then turn to a setup that incorporates dependency. In the supplementary material, we simulate a setup that mirrors the real-life application presented in Section~\ref{Sec.Application}. {\blue In the implementation of our procedures we compute PBDs following the method described in \cite{Biscarri2018} as implemented in the R Package ‘PoissonBinomial’.}

\subsection{Independent Hypotheses}\label{Sec.IIDsim}

We consider a setup where the observed data are generated from the two-group model \eqref{eq:mixmodel}. The non-null proportion is given by $\pi \in \{0.1, 0.2, 0.3\}$, the null distribution is $\mathcal N(0,1)$, and the non-null distribution is $\mathcal{N}(\mu, 1)$, with $\mu \in \{-1.5, -2, -2.5\}$. This is a standard setting; for example, the case with $\pi = 0.2, \ \mu = -2$, corresponds to the example in Table 1 of \cite{heller2021optimal}. 

We simulate $10,\!000$ data sets, each with $m=5,\!000$ tests. We evaluate Procedure 2 for ($\gamma = 0.05$, $\alpha = 0.05$), and compare it to a set of representative procedures:~\cite{SunCai07} (SC), \cite{BenHoc95} (BH), \cite{GuoRomano07} (GR), and \cite{LehmannRomano05} (LR). {\blue All SC procedures reported here, as a well-justified contender to the BH method, use the estimated lfdr statistic.} For Procedure 2 we consider three versions: the oracle, which knows the non-null distribution and the non-null proportion (Oracle), the version where the distributional parameters are unknown and need to be estimated ($lfdr$), and the version  where the distributional parameters are unknown but a strong prior of 1 is imposed on the proportion of nulls ($lfdr(\hat{\pi} = 0))$, which can be viewed as a conservative version of the \lfdr \ procedure. The FDX and ETP levels of different methods are obtained by averaging the testing results from the $10,\!000$ data sets. We summarize the empirical performances of different methods in Table~ \ref{Tbl.Comparison}. In the FDX row, the numbers represent  the relatively frequencies among the $10,\!000$ experiments where the realized FDP exceeds $\gamma$.

The following three general patterns can be observed across various simulation scenarios:
\begin{enumerate}

\item The BH and SC procedures, which are designed to control FDR, are effective for FDR control, with SC close to the desired level $\gamma$, and BH close to $(1-\pi)\gamma$. Both procedures have highly inflated FDX especially for weak and sparse scenarios:~A key motivation for our work.

\item The procedures that are designed to control FDX have varying degrees of success. The FDX level of Procedure 2 yields an $\alpha$ that is generally close to the 5\% target. By contrast, GR and LR become {\blue more conservative} as sparsity decreases (i.e., $\pi$ increases) and as the absolute value of the non-null effect becomes larger.

\item The FDX methods are in general less powerful than the  FDR procedures. Procedure 2 is more powerful than GR and LR in most settings, and is relatively closer to BH in power than the other two FDX methods are in some settings. 

\end{enumerate}

\begin{boxedtable}{Comparison of different procedures}{Tbl.Comparison}
\tablexplain{The table compares the performance of Procedure 2 relative to some popular methods:~\cite{SunCai07} (SC), \cite{BenHoc95} (BH), \cite{GuoRomano07} (GR), and \cite{LehmannRomano05} (LR). Three versions of Procedure 2 are implemented:~the oracle version (Oracle), a version where the parameters of the data generating process are estimated from the data ($lfdr$), and a version where we impose the assumption that $\pi = 0$, $lfdr(\hat{\pi} = 0)$. The data-generating process is a mixture model, where with probability $1-\pi$ the test is drawn from a $\mathcal{N}(0,1)$ (null), and with probability $\pi$ the test is drawn from $\mathcal{N}(\mu, 1)$ (alternative), where $\pi \in \{0.1, 0.2, 0.3\}$ and $\mu \in \{-1.5, -2, -2.5\}$. Each simulation considers 5000 tests. We repeat the exercise for 10000 simulations. FDX control is implemented at $\gamma = 0.05$ and with a confidence level $1-\alpha = 0.95$. FDR control is implemented at a nominal level of $\alpha = 0.05$. } 
\begin{ctabular}{l l rrrrrrr} 
\midrule
&&&&&&\multicolumn{3}{c}{Procedure 2}\\
\cmidrule{7-9}
& & SC & BH &  GR & LR & Oracle & $lfdr$ & $lfdr(\hat{\pi} = 0)$\\  
\midrule

$\pi = 0.2$ & FDX & 0.452 & 0.348 & 0.040 & 0.040 & 0.047 & 0.076 & 0.062\\ 
$\mu = -1.5$& FDR & 0.050 & 0.040 & 0.013 & 0.013 & 0.015 & 0.021 & 0.019\\ 
& Power (\%) & 3.4 & 2.6 & 0.3 & 0.3 & 0.3 & 0.4 & 0.4\\ 
\midrule 
$\pi = 0.2$ & FDX & 0.484 & 0.242 & 0.039 & 0.037 & 0.052 & 0.066 & 0.025\\ 
$\mu = -2$& FDR & 0.050 & 0.040 & 0.004 & 0.004 & 0.028 & 0.029 & 0.022\\ 
& Power (\%) & 22.6 & 18.8 & 1.5 & 1.2 & 13.5 & 14.0 & 10.6\\  
\midrule 
 $\pi = 0.2$& FDX & 0.471 & 0.133 & 0.014 & 0.000 & 0.049 & 0.054 & 0.006\\ 
$\mu = -2.5$& FDR & 0.050 & 0.040 & 0.030 & 0.002 & 0.036 & 0.036 & 0.028\\  
& Power (\%) & 51.0 & 46.5 & 40.6 & 7.6 & 44.3 & 44.3 & 39.6\\ 
\midrule
$\pi = 0.1$ & FDX & 0.458 & 0.399 & 0.044 & 0.042 & 0.047 & 0.068 & 0.059\\ 
$\mu = -2$& FDR & 0.049 & 0.045 & 0.007 & 0.007 & 0.011 & 0.016 & 0.014\\ 
& Power (\%) & 11.2 & 10.3 & 1.2 & 1.2 & 2.3 & 3.1 & 2.5\\ 
\midrule 
$\pi = 0.3$ & FDX & 0.484 & 0.061 & 0.019 & 0.019 & 0.053 & 0.059 & 0.004\\ 
$\mu = -2$& FDR & 0.050 & 0.035 & 0.011 & 0.002 & 0.035 & 0.035 & 0.023\\  
& Power (in \%) & 33.2 & 25.8 & 8.8 & 1.4 & 25.7 & 25.9 & 18.7\\ 
\midrule 
\end{ctabular}
\end{boxedtable}

Finally, we compare the performance of the three versions of Procedure 2. When the $lfdr$ is estimated, the procedure generally delivers higher levels of FDX, particularly when signals are sparse and weak. To address the FDX inflation in  situations where the signals are sparse and weak, we propose  a conservative approach which imposes the assumption that the null proportion is approximately 1 [$lfdr(\hat{\pi} = 0)$]. We see that this approach leads to conservative FDX control but is still more powerful than GR and LR. 

\subsection{Impact of Dependence}\label{Sec.Dependencies}

As mentioned in Section~\ref{Sec.Exchangeability}, Procedure 2 is not guaranteed to control the FDX when the hypotheses are non-exchangeable. However, in particular cases which are useful in practice, Procedure 2 can still achieve excellent performance by using its empirical Bayes framework to estimate the $lfdr$ statistic in order to learn the structure of the null distribution. 

First, we illustrate the impact of dependence on the distribution of Student t-statistics using scenarios which may reasonably reflect dependence structures in real life.  Figure~\ref{Fig.Scatter1} shows a scatter plot of 100 pairs of two null test statistics under three settings. In the left panel, the two null test statistics are independently generated. In the middle panel, the data are dependent, because they are  generated as the sum of an independent $\mathcal N(0, 1)$ variable and a noise variable following an equi-correlated structure with $\rho = 0.25$. In the right panel, the observed data are again dependent but follow a hierarchical setup. In this setup $\pi=0.1$, and non-null $z$-values can be divided into two groups: group 1 follows the marginal distribution $\mathcal{N}(0.25, 1)$ with probability $\pi^+=0.05$ and group 2 follows the marginal distribution $\mathcal{N}(-0.25, 1)$ with probability $\pi^-=0.05$. Under the null, observations arise from a perturbed variation of the standard normal $\mathcal{N}(\mu, 1)$, where $\mu \sim Unif[-0.1, 0.1]$. Note that conditional on the realized value of $\mu$, the observations are independent; unconditionally, the null observations are not. Specifically, the correlation among the Student test-statistics is about 0.25.  Figure~\ref{Fig.Scatter1}  shows a distinctly different distribution for the independent statistics from the distribution for dependent statistics. However, aside from the third scatter plot being more spread out due to the location shift, the patterns in the last two plots look strikingly similar. 

\begin{boxedfigure}{Conditional dependencies}{Fig.Scatter1}
\tablexplain{The figure visually compares different distributions of test statistics:~the leftmost panel corresponds to mutually independent tests; the middle panel shows tests statistics obtained from data that have a 0.25 correlation caused by common additive noise; the rightmost panel, instead, shows test statistics obtained from conditionally independent samples, which are however unconditionally correlated because of a common location shift that induces a correlation of about 0.25.}
\begin{center}
\includegraphics[width=0.8\textwidth]{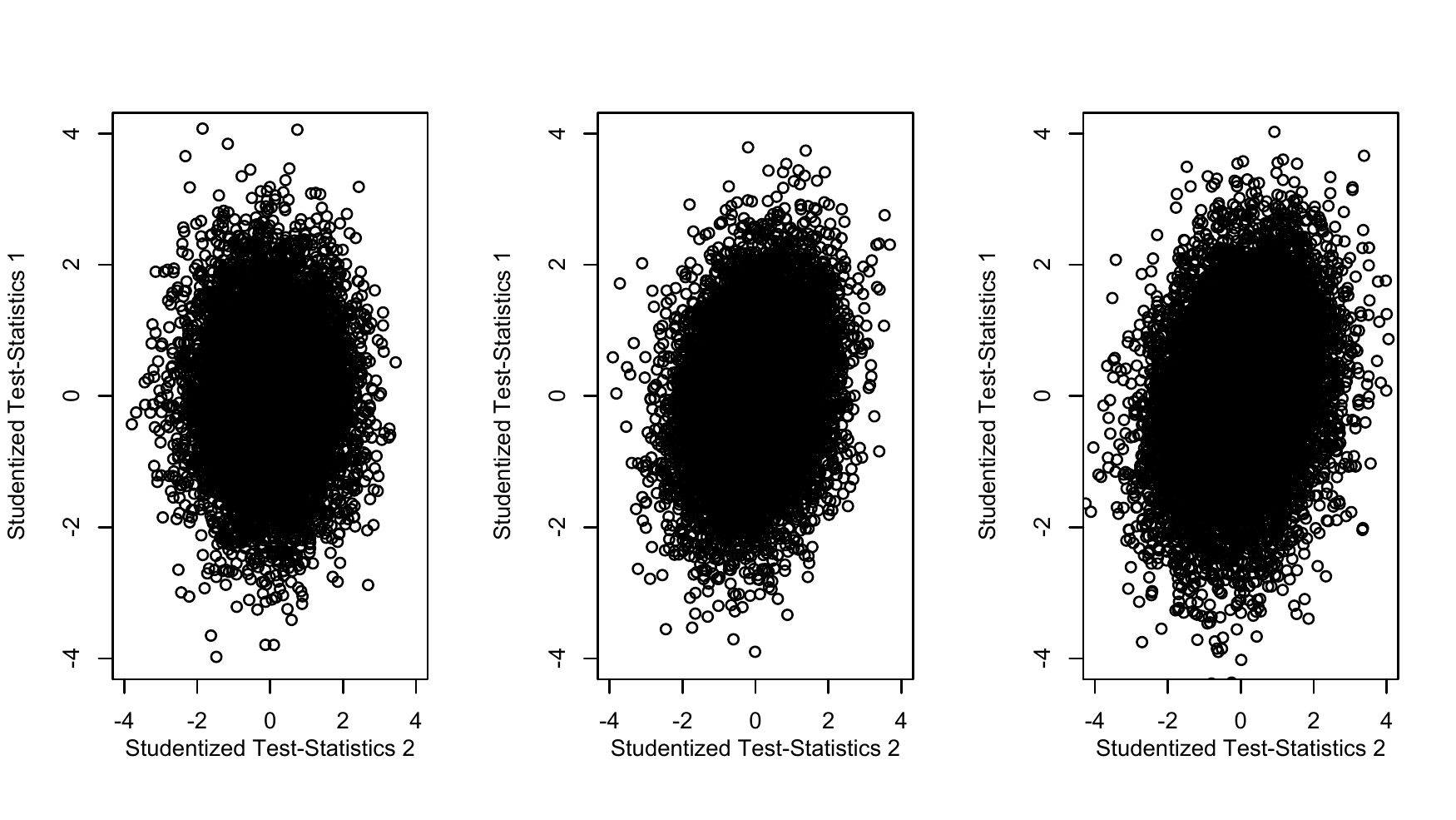}
\end{center}
\end{boxedfigure}

In our numerical experiments, we simulate data for $m=5,\!000$ tests. The goal is to control the FDX at $\gamma = 0.1$ with confidence level $1-\alpha = 0.95$. We compare our proposed procedure to the method in \cite{GuoRomano07}  (GR), and the bootstrap-based procedure in \cite{romano2007control} and \cite{romano2008formalized} (RSW). Notably, RSW is shown to asymptotically control the FDX in the presence of arbitrary dependencies.

Table~\ref{Tbl.Scatter1} reports the realized FDX, average power (correctly rejected hypotheses as a percentage of the number of non-nulls), and the average of the realized FDPs over $10,\!000$ experiments. In Columns 3-4, we tabulate results obtained for GR and RSW, both of which assume the theoretical null (i.e., $\mathcal{N}(0,1)$). Columns 5-6 provide corresponding statistics when the null distribution is estimated from the data (i.e., the empirical null). We correct for the unknown null location by re-centering all the test statistics using the mean of all observations. Note that Procedure 2 is designed to adapt to the scenario automatically so that there is no essential difference when re-centering the data. The $lfdr$ statistic is estimated using the $locfdr$ package \citep{efron2004large}. 

We can see that both GR and RSW yield FDX levels that are much higher than the nominal level. Both procedures do better in FDX control when the estimated empirical null is used (columns 5-6), although RSW becomes very conservative. By contrast, Procedure 2 is capable of controlling the FDX at the desired level while maintaining good power.

\begin{boxedtable}{Conditional dependencies}{Tbl.Scatter1}
\tablexplain{The table reports the average FDX, FDR, and power for 10000 simulated experiments. Each experiment considers {\blue 4000} tests drawn in a hierarchical setup where a hypothesis is null most of the time, with a probability of 0.90. When non-null, one-half of the time, the marginal distribution is $\mathcal{N}(0.25, 1)$, and one-half of the time it is $\mathcal{N}(-0.25, 1)$. When null, the observations arise from a perturbed variation of the standard normal,  $\mathcal{N}(\mu, 1)$ where $\mu \sim Unif[-0.1, 0.1]$. We compare Procedure 2 {\blue (Data-Driven)} to \cite{romano2008formalized}  (RSW) and \cite{GuoRomano07}  (GR). To facilitate comparison, the table shows the results from applying the original RSW and GR procedures (Theoretical Null) alongside the case where the econometrician can estimate the null distribution from the data (Empirical Null). FDX control is implemented at $\gamma = 0.1$ and with a confidence level $1-\alpha = 0.95$.}
\begin{ctabular}{l l rrrrrrr} 
\midrule
&&\multicolumn{2}{c}{Theoretical Null}&\multicolumn{2}{c}{Empirical Null}\\
\cmidrule{3-4} \cmidrule{5-6}
& Procedure 2 & RSW & GR &  RSW & GR\\  
\midrule
FDX & 0.046 & 0.492 & 0.771 & 0.003 & 0.164 &\\ 
FDR & 0.057 & 0.131 & 0.223 & 0.016 & 0.079\\ 
Power (\%) & 28.5 & 18.2 & 32.6 & 10.7 & 32.2\\ 

\midrule
\end{ctabular}
\end{boxedtable}

\section{Application:~Financial Trading Strategies}\label{Sec.Application}

We apply Procedure~\ref{proc:proposedfaster} to a real-life example where the goal is to identify interesting trading strategies among over two million candidate strategies, as in \cite{chordia2020anomalies}. The construction of trading strategies reflects exactly the simulation setup of Section~\ref{Sec.FinSimulation} in the supplementary materials. Each trading strategy is constructed by sorting stocks into deciles based on a trading signal at the end of June of each year. Stocks in the top decile are purchased at the closing price, and stocks in the bottom decile are sold short. Portfolio compositions are held for twelve months, but the weights are rebalanced monthly to reflect value-weighted exposures (i.e., stocks weights are proportional to relative market capitalizations). As trading signals, we consider every variable in the combined COMPUSTAT/CRSP datasets:~we take the level, the growth rate, the ratio between two variables, and a transformation of three variables (i.e., (x1 - x2)/x3). When more than one variable is involved, we consider all possible combinations. We apply filters to guarantee that strategies are well populated and that microstock returns are not overly biased. Details can be found in \cite{chordia2020anomalies}. In total, we obtain 2,396,456 trading strategies for the period between 1972 and 2015.

The data generating process provides that asset (stocks or portfolios) returns arise because of three components:~a systematic risk-premium, an idiosyncratic mean-zero, and a time-invariant component ($\alpha_i$). Under the null, the returns are entirely due to compensation for exposure to systematic risk factors; the time-invariant component is precisely zero (i.e., $\alpha_i =0$). We test whether the portfolio $\alpha_i$ is zero (i.e., this is a two-tail test) for 2,396,456 strategies. Thus we have a standard multiple testing problem.

The rationale behind adopting a combinatorial approach to constructing trading signals is essentially twofold:~it accounts for both variables used in common practice and variables left out by common practice. On the one hand, there is a long tradition among finance scholars and practitioners to relate stock returns to accounting variables:~quantities such as the equity market value of a firm (i.e., a level), the profitability of the assets (i.e., a ratio of two), and the ratio of assets minus equity, divided by assets (i.e., a transformation of three) have all been studied as predictors of future stock returns. See, for example, \cite{ChenZimmerman2021} who construct a large laboratory dataset of such variables. On the other hand, only predictors that worked and were discovered by academics or those that no longer worked and were found by industry practitioners are known. That leaves a large set of possible trading signals:~those that were tried by academics or practitioners but did not work, those still used in the industry but that are not widely publicized (for obvious reasons), and those that were never tried in the first place. In other words, there is a significant file drawer problem by considering many trading signals of the same functional form as those that have likely been studied. The combinatorial approach aids in providing an exhaustive set that can be analyzed through the lens of a multiple testing procedure.  Adopting the combinatorial approach to generating trading signals is not without consequences:~we will uncover many trading strategies that appear to be very profitable, but which likely are also meaningless, in the sense that they are artifacts of our data mining, are not based on any reasonable economic argument, and are likely not going to be profitable out of sample.

As mentioned above, this set of trading strategies has been studied in \cite{chordia2020anomalies}. The authors apply several multiple hypotheses procedures and still ``find'' many profitable strategies before applying some economic restriction. We expect the proportion of signals that are true predictors to be tiny. Thus, we would expect that a proper multiple testing procedure fails to select the very great majority of the strategies. Probably a problematic aspect of that study is the failure to incorporate the information gained from the data about the null distribution into the procedures. Thus, this particular data set seems perfect to evaluate Procedure 2, which heavily relies on ranking hypotheses by the data-driven \lfdr \ (i.e., the one that relies on the empirical null). 

\begin{boxedfigure}{{\blue Data representation under different assumptions}}{Fig.histograms}

\tablexplain{The figure compares the histogram of the distribution of $t$-statistics for 2,396,456 trading strategies to the $\mathcal{N}(0,1)$ and the density of the estimated empirical null. We use the analytical method of estimating the empirical null distribution parameters and null proportion described in Section 4 of \cite{efron2008microarrays}. The data contains 2,396,456 trading strategies for the period between 1972 and 2015.}

\vspace{-1.5in}
\begin{center}
\includegraphics[width=0.8\textwidth]{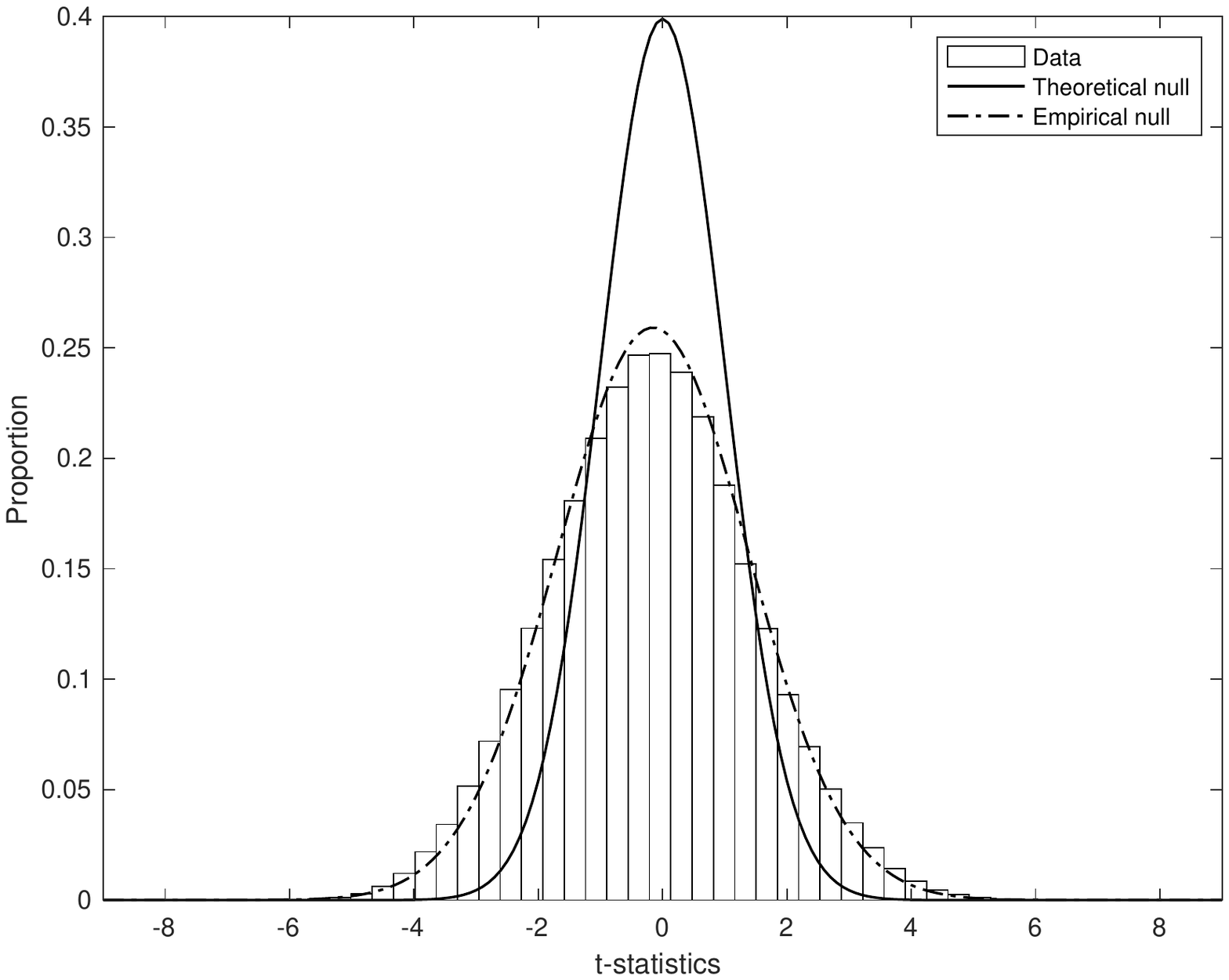}
\end{center}

\vspace{-1.5in}
\end{boxedfigure}

In Figure~\ref{Fig.histograms} we compare the histogram of the distribution of 2,396,456 alpha $t$-statistics, with the theoretical null (i.e., a normal with mean zero and standard deviation equal to 1) and with the empirical null. We estimate the empirical null distribution parameters and null proportion using the analytical method described in Section 4 of \cite{efron2008microarrays}.

The data is more spread out than the theoretical null but relatively close to the empirical null. The cross-sectional distribution of estimated alphas is dependent because some signals are correlated, and alpha is conditional on a set of common returns. In that sense, conceptually, the Efron empirical null is a much better approximation to the data-generating model. However, because our data contains many trading strategies that have already been reported as profitable, we expect some amount of divergence in the tails of the respective distributions (i.e., few genuinely non-zero alphas). How many non-nulls may be in the data is a question that can only be answered by correcting for multiple hypotheses.

We implement Procedure 2 and report in Table~\ref{Tbl.TradingStrategies} the number of strategies that are selected for different levels of $\gamma$ and $\alpha$, where $\gamma$ denotes the maximum allowable proportion of false discoveries (FDP), and $\alpha$ refers to the allowable tail probability. Similar to what we do for the simulation exercise described in Section~\ref{Sec.FinSimulation} in the supplementary materials, we compare the results obtained from applying Procedure 2 to those obtained  by applying the \cite{GuoRomano07} procedure (GR), the FDP-StepM procedure of \cite{romano2007control} and \cite{romano2008formalized} (RSW), and the FDR procedure of \cite{SunCai07} (SC).

\begin{boxedtable}{Discoveries in a sample of 2,396,456 trading strategies}{Tbl.TradingStrategies}
\tablexplain{The table reports the number of trading strategies selected by Procedure 2, \cite{GuoRomano07} (GR), the FDP-StepM procedure of \cite{romano2007control} and \cite{romano2008formalized} (RSW), and the FDR procedure of \cite{SunCai07} (SC) when applied to the set of 2,396,456 stock trading strategies constructed from accounting and stock price information for the period between 1972 and 2015.}

\begin{ctabular}{l r r r  c  r r r}
\midrule
&\multicolumn{7}{c}{Panel A: Procedures based on empirical null}\\
\midrule
&\multicolumn{3}{c}{Procedure 2}&&\multicolumn{3}{c}{GR}\\
$\gamma / \alpha$&0.01&0.05&0.10&&0.01&0.05&0.10\\
\cmidrule{2-4} \cmidrule{6-8}
0.05&1&4&5&&1&2&3\\
0.10&1&20&24&&1&2&3\\
0.20&40&47&51&&1&2&3\\
\midrule
&\multicolumn{7}{c}{Panel B: Procedures based on theoretical null}\\
\midrule
&\multicolumn{3}{c}{Procedure 2}&&\multicolumn{3}{c}{GR}\\
$\gamma / \alpha$&0.01&0.05&0.10&&0.01&0.05&0.10\\
\cmidrule{2-4} \cmidrule{6-8}
0.05&253,837&254,894&255,454&&204,011&205,113&205,654\\
0.10&416,217&417,137&417,627&&351,377&352,334&352,907\\
0.20&698,052&698,881&699,322&&626,247&627,418&628,007\\
\midrule
&\multicolumn{7}{c}{Panel C: Alternative procedures}\\
\midrule
&\multicolumn{3}{c}{RSW}&&\multicolumn{3}{c}{SC}\\
$\gamma / \alpha$&0.01&0.05&0.10&&0.01&0.05&0.10\\
\cmidrule{2-4} \cmidrule{6-8} 
0.05&5,528&32,812&65,001&&290,932&411,855&549,435\\
0.10&21,867&90,722&15,1614&&446,077&549,435&673,698\\
0.20&96,241&235,708&328,007&&722,506&808,424&915,176\\
\midrule
\end{ctabular}
\end{boxedtable}

In general, the number of findings increases with how many false discoveries are allowed. For example, for a choice of $\gamma = 0.10$ and $\alpha = 0.05$, we pick out 20 strategies, while for a $\gamma = 0.2$ the procedure selects 47 strategies. The number of selected strategies also varies considerably with the assumption about the shape of the null hypothesis:~if one relies on the theoretical null (Panel B) instead of Efron's empirical null, the number of discoveries grows dramatically.

Procedure 2 selects more strategies than the standard frequentist procedure of  \cite{GuoRomano07}, which emerges from our simulation as the most powerful solution in the frequentist's paradigm. This is not surprising as the result of the simulation presented in the previous section confirms GR to be less powerful. For example, in the case where the empirical null is used, $\gamma = 0.10$, and $\alpha = 0.10$, Procedure 2 selects 24 strategies while GR selects 3. When the theoretical null is used for the same parameters, Procedure 2 selects 417,000 strategies, while GR selects 352,000. This is understandable as Efron's method relaxes the null specification, allowing more of the data to fall under the null distribution. This ultimately restricts the number of strategies that can be selectable by any procedure.

Finally, compared to procedures based on entirely different assumptions, the number of selected strategies by Procedure 2 is between RSW, and SC. RSW aims to control FDP at the same levels of $\gamma$ and $\alpha$. Still, it is very conservative while SC controls FDR at a level equal to $\alpha + (1 - \alpha) \ \gamma$. The comparison reinforces the cautionary message about applying a multiple comparisons procedure to a vast number of tests when all the information in the data is not taken into account, resulting in questionable answers. Learning the parameters of the data-generating process is, therefore, a valuable effort, especially in the context of the procedure that we propose in the paper, which relies on the \emph{local false discovery rate} as one of its primary inputs.

\section{Discussion}\label{sec:discuss}

Controlling the FDX provides an instrumental framework for applications where experiments are carried out only once or a few times. Unlike FDR methods, which only offer a long-run average guarantee, FDX methods provide a high-confidence control of the FDP for individual experiments. Under an empirical Bayes framework, we employ the PBD and its connection to \emph{lfdr} to develop an $(\gamma, \alpha)$-level FDX procedure and demonstrate its superiority over existing methods.
There are several open issues as the scope of this work progresses. 

One such issue is that several other error rates could be more attractive to specific researchers. For example, maximizing power only on the ``nicer'' realizations, that is, on those realizations where the FDP is indeed controlled at $\gamma$. Such exciting and highly relevant error rate notions are  left for future explorations.

Also, an essential question for FDX and FDR methods based on the empirical Bayes framework is how to estimate the \emph{lfdr} statistic in practice. This paper only considers a few available estimates to provide an implementable FDX procedure. Although valuable, a careful study of  \emph{lfdr} estimation is outside the scope of this work.


%

\section*{Acknowledgements}

We thank the two anonymous reviewers, Associate Editor and the Editor, whose comments have significantly improved the work and its exposition. The views expressed in this paper do not necessarily reflect those of the Federal Reserve Bank of Dallas or its staff. Pallavi Basu is grateful to Profs.~Sebastian D\"{o}hler, Etienne Roquain, and Daniel Yekutieli for insightful discussions and to Mr.~Gunashekhar Nandiboyina for parallel computing support. Part of the work was done while PB was at Tel Aviv University and has received funding from the European Research Council under the European Community's Seventh Framework Programme (FP7/2007-2013) / ERC grant agreement n° [294519]-PSARPS. Her research is partially supported by a MATRICS grant (MTR/2022/000073) from DST-SERB, India. All errors are our own.

{\small
\singlespace
\bibliographystyle{chicago}
\bibliography{refs}
}

\newpage

\appendix

\bigskip
\begin{center}
{\large\bf SUPPLEMENTARY MATERIAL}
\end{center}

\setcounter{section}{0}

\setcounter{page}{1}

We provide the proofs of Propositions 2 and 3, an illustration, and a counterexample. {\blue And we provide asymptotic guarantees of the proposed procedure.} Further, we offer a detailed numerical experiment replicating the data generating model for the stock returns trading strategies analyzed in Section 6. All numeric sections here refer to the content in the main paper.

\section{Proof of Proposition 2}\label{sec:supp_proof_prop2}

\begin{proof} We argue by contradiction. Suppose Proposition 2 is not true. Then for any claimed ``optimal'' decision rule there must exist a subset in the sample space $\mathbb{Z}$, which depends on the decision rule, such that $\PP\{w: \Z(w) \in \mathbb{Z}\} > 0$, where the decisions obtained by ranking \emph{lfdr} as in Procedure 2 are not preserved. {\blue That is, for the  claimed decision rule $\pmb \delta^*$,  there exist hypotheses $j_{\Z}$ and $\ell_{\Z}$, depending on the observed sample point or data $\Z$, such that $lfdr_{j_{\Z}} < lfdr_{\ell_{\Z}}$ but $\delta_{j_{\Z}} = 0$ and $\delta_{\ell_{\Z}} = 1$. For ease of notation, we skip the subscript $\Z$ when we further refer to indices $j$ and $\ell$, however, they are always contingent on the realized data.}

Then consider constructing  $\pmb \delta^{new}$, an alternative decision rule where we pick a particular pair $j$ and $\ell$ to modify so that  $\delta^{new}_j = 1$ and $\delta^{new}_{\ell} = 0$, and all other decisions are the same as in those in $\pmb \delta^*$.

 

Then we have,
\begin{align*}
\EE \left[ \sum_i (1-\theta_i) \delta_i^* \right]=& \EE_{\Z} \left[ \EE_{\theta|\Z}  \left\{ \sum_i (1-\theta_i) \delta_i^* \right\} \right]  \\
=& \EE_{\Z} \left[\sum_i (1-lfdr_i) \delta_i^* \{ I_{\Z \in \mathbb{Z}} + I_{\Z \in \mathbb{Z}^c} \}  \right]  \\
\leq &  \EE_{\Z} \left[ \sum_i (1-lfdr_i) \delta_i^{new} \{ I_{\Z \in \mathbb{Z}} + I_{\Z \in \mathbb{Z}^c} \} \right] \\
\leq & \EE \left[ \sum_i (1-\theta_i) \delta_i^{new} \right]. \label{opt_power_2}
\end{align*}


Next we verify that $\pmb \delta^{new}$ is a valid decision rule for FDX control. Note that the number of rejections, $k_0$, is the same in $\pmb \delta^*$ and $\pmb \delta^{new}$. Define $\pmb K$ as the set of rejected hypotheses common to both decision rules. Suppose we have shown that
\begin{equation}\label{toshow}
\PP_{\theta|\Z} \left (\sum_i \{(1-\theta_i)\delta^{new}_i\} > \gamma k_0\right ) \leq \PP_{\theta|\Z} \left (\sum_i \{(1-\theta_i)\delta^{*}_i\} > \gamma k_0\right )
\end{equation}
then if $ \pmb \delta^*$ is valid so is $\pmb \delta^{new}$. 

We now show that (\ref{toshow}) holds. Conditional on  $\pmb  Z$, we have that 
$$
\sum_i \{(1-\theta_i)\delta^{*}_i\} \sim \PBD\left(\{lfdr_{i \in \pmb K}, lfdr_{\ell}\}, k_0\right).
$$ 
Similarly, we have that $\sum_i \{(1-\theta_i)\delta^{new}_i\} \sim \PBD(\{lfdr_{i \in \pmb K}, lfdr_j\}, k_0)$. Recall that we have $lfdr_j < lfdr_{\ell}$. 
We consider random variables 
$$Y^{*} \sim \PBD(\{p_1 \cdots p_\ell^* \cdots p_{k_0}\}, k_0) \mbox{ and } 
Y^{new} \sim \PBD(\{p_1 \cdots p_j^{new} \cdots p_{k_0}\}, k_0)
$$ 
with $p_j^{new} < p_\ell^*.$ 
Then we only need to show that $\PP(Y^{new} > \gamma k_0) \leq \PP(Y^{*} > \gamma k_0)$.

Because the states of the hypotheses are independent, we can write 
$$\PP(Y^{*} > \gamma k_0) = p_\ell^* \PP(\bar{Y}^{*} > \gamma k_0 - 1) + (1 - p_\ell^*)\PP(\bar{Y}^{*} > \gamma k_0),
$$ where $\bar{Y}^* \sim \PBD(\{p_{i \in \pmb K}\}, k_0 - 1)$. Then,
\begin{align*}
\PP(Y^{*} > \gamma k_0) =& \PP(\bar{Y}^{*} > \gamma k_0) + p_\ell^* \{\PP(\bar{Y}^{*} > \gamma k_0 - 1) - \PP(\bar{Y}^{*} > \gamma k_0)\} \\
=& \PP(\bar{Y}^{*} > \gamma k_0) + p_\ell^* \PP(\bar{Y}^{*}  = \lceil \gamma k_0 - 1 \rceil) \\
\geq& \PP(\bar{Y}^{*} > \gamma k_0) + p_j^{new} \PP(\bar{Y}^{*}  = \lceil \gamma k_0 - 1 \rceil) \\
=& \PP(Y^{new} > \gamma k_0),
\end{align*}
where $\lceil x \rceil$ refers to the nearest integer strictly larger than $x$. This completes the proof. 
\end{proof}

\section{Illustration of the Computational Advantage of Procedure \ref{proc:proposedfaster}}

We implement Procedures~\ref{proc:proposedOR} and \ref{proc:proposedfaster} on three samples, each with $m=10,\!000$ independent $z$-values generated from the mixture distribution \eqref{mixF}:
$$
Z_i\sim (1-\pi)\mathcal{N}(0,1)+\pi\mathcal{N}(-2,1), \quad i\in [m],
$$
where $\pi=0.1$. Table \ref{Tbl.speed} reports the computation time for each run, as well as the progressive upper limits of rejections, at the FDX level ($\gamma = 0.1$, $\alpha = 0.05$).

We can see that the two procedures, which are shown to be equivalent in Section \ref{sec:computational-shortcuts}, reject the same number of hypotheses and thus produce the same FDP. However, the computation times required by the two procedures are very different:~Procedures~\ref{proc:proposedOR} requires 3.8 minutes whereas Procedures~\ref{proc:proposedfaster} accomplishes the same task in a fraction of a second. 
Our results also reveal the mechanics of Procedure 2: from Panel B, we see that it first quickly reduces the number of tests from $10,\!000$ to about 300 trials, represented by the entries in column $K_1$. Then, further reductions in the search are achieved, with the respective numbers of tests given by the entries in columns $K_2$ and $K$. The realized FDPs are close to the desired 0.1, with one in the three reported runs slightly exceeding 0.1.

\begin{boxedtable}{Computational advantages of Procedure 2}{Tbl.speed}
\tablexplain{The table report illustrative results of the computational advantage of Procedure 2 (Panel B) relative to Procedure 1 (Panel A). The underlying data generating process is a mixture model where 90\% come from a $\mathcal{N}(0,1)$ and 10\% come from a $\mathcal{N}(-2,1)$. We report the number of rejected nulls at each step of the oracle procedure and the total execution time for a sample number of runs (3 reported). FDX control is implemented at $\gamma = 0.1$ and with a confidence level $1-\alpha = 0.95$. We report the execution times for the experimental runs for a CPU using a 3GHz processor with 8GB RAM.}

\begin{ctabular}{c r r r c c} 
\midrule 
& $K_1$ & $K_2$ & $K$ &  Realized FDP & CPU Time\\ 
\cmidrule{2-6}
&\multicolumn{5}{c}{Panel A: Procedure 1}\\
\cmidrule{2-6}
Run 1 & - & - & 150 & 0.10 & 3.83 mins\\ 
Run 2 & - & - & 142 & 0.06 & 3.68 mins\\ 
Run 3 & - & - & 130 & 0.07 & 3.84 mins\\ 
\midrule
&\multicolumn{5}{c}{Panel B: Procedure 2}\\
\cmidrule{2-6}
Run 1 & 337 & 200  & 150 & 0.10 & 0.16 secs\\ 
Run 2 & 340 & 193  & 142 & 0.06 & 0.17 secs\\ 
Run 3 & 294 & 182  & 130 & 0.07 & 0.22 secs\\  
\midrule
\end{ctabular}
\end{boxedtable}

\section{Proof of Proposition 3}\label{sec:supp_proof_prop3}

\begin{proof} Note that in the exchangeable case, $\mu$ takes a specific sign (and value), either positive or negative, for all tests. Suppose without loss of generality that $\mu < 0$.  We show that, in such a case, ranking by \emph{lfdr} is equivalent to ranking by ascending values of z-scores, which is the same as ranking by ascending values of marginal \emph{lfdr}. Rewrite:
\[
P(\Z| \pmb \theta) \propto \exp \left \{ - \frac{1}{2} \Z \Sigma^{-1} \Z \right \} \cdot \exp \left \{ - \frac{1}{2} \mu^2 \pmb \theta \Sigma^{-1} \pmb \theta \right \} \cdot \exp \left \{ \mu \pmb \theta \Sigma^{-1} \Z \right \}.
\]  
If $z_1 > z_2$, and not altering other z-scores, then $$P(\Z| \theta_1 = 0, \theta_2 = 1, \pmb \theta_{\{-1, -2 \}}) > P(\Z| \theta_1 = 1, \theta_2 = 0, \pmb \theta_{\{-1, -2\}}),$$ where $\pmb{\theta}_{\{-1, -2\}}$ represents the vector of all $\theta$ that are not $\theta_1$ and $\theta_2$.  Note, in fact, that 
\begin{equation} \label{for_remark}
\ln \{ P(\Z| \theta_1 = 0, \theta_2 = 1, \pmb \theta_{\{-1, -2 \}})/P(\Z| \theta_1 = 1, \theta_2 = 0, \pmb \theta_{\{-1, -2\}}) \} \propto \mu (z_1 - z_2) ( \Sigma^{-1}_{ij} - \Sigma^{-1}_{ii} ),
\end{equation}
where $i \neq j$ represent all the off-diagonal entries; $\Sigma^{-1}_{ij} - \Sigma^{-1}_{ii} = 1/(\sigma^2 \cdot (\rho - 1)) < 0$; and $\mu < 0$. Denote by $\sigma^2 > 0$ and $\rho < 1$ the common variance and correlation respectively for the Gaussian multivariate density. Then to show that ranking by \emph{lfdr} is equivalent to ranking by increasing values of z-scores we must establish that ~$\PP(\theta_1 = 0|\Z) > \PP(\theta_2 = 0|\Z)$. The result follows since 
\begin{align*}
&\PP(\theta_1 = 0|\Z) - \PP(\theta_2 = 0|\Z) \\
&= \sum_{\pmb \theta_{\{-1, -2 \}} \in \{0, 1\}^{m-2}}\PP(\theta_1 = 0, \theta_2 = 1, \pmb{\theta}_{\{-1, -2 \}}|\Z) - P(\theta_1 = 1, \theta_2 = 0, \pmb{\theta}_{\{-1, -2 \}}|\Z),
\end{align*}
and $\theta_i \sim Ber(\pi)$ independently. 

Showing that ranking by ascending  $z$-scores is the same as ranking by ascending values of marginal \emph{lfdr} is simpler:~if $\mu < 0$, a higher positive value of a $z$-score indicates a higher marginal \emph{lfdr} value. Similarly, the converse holds when $\mu > 0$.

It remains to be shown that ranking by the \emph{lfdr} maximizes the power among all decisions that control the $FDX (\gamma, \alpha)$ in the exchangeable and Gaussian framework. To do so, if the hypotheses are not already ranked by the marginal \emph{lfdr}, consider switching decisions. As we show in the proof to Proposition~\ref{Prop:OR}, swapping decisions leads to an increase in power. Thus,  ranking by marginal \emph{lfdr}, will lead to maximum power. Moreover, note that when $\mu < 0$ and $z_1 > z_2$ we have $P(\theta_1 = 0, \theta_2 = 1, \pmb \theta_{\{-1, -2 \}}|\Z) > P(\theta_1 = 1, \theta_2 = 0, \pmb \theta_{\{-1, -2 \}}|\Z)$ (as shown in the previous paragraph). Hence, if the original procedure had valid FDX control, so does the modified procedure.
\end{proof}

\begin{remark}
For generalized non-Gaussian distributions, (\ref{for_remark}) highlights the sufficient condition to guarantee both the optimality of ranking and that ranking by the marginal \emph{lfdr} is adequate. This condition is naturally satisfied for exchangeable elliptical distributions such as the multivariate normal, t, or Laplace distributions. Similarly, for a general test statistic, such as the absolute values of the z-score or of the t-statistic, it is sufficient that the left side term of (\ref{for_remark}) is negative (or positive in $T(\cdot)$). That is, the sufficient condition is that if $T_j(\z) := |z_j| < |z_i| =: T_i(\z)$, then $P(T(\Z)| \theta_i = 0, \theta_j = 1, \pmb \theta_{\{-i, -j \}}) < P(T(\Z)| \theta_i = 1, \theta_j = 0, \pmb \theta_{\{-i, -j\}})$ holds for the joint density. When iid, this condition factorizes to the monotone likelihood ratio (MLR) criterion. Thus this condition may be viewed as a generalized MLR condition for multivariate densities.
\end{remark}

{\blue \section{Asymptotic Guarantees} \label{Sec.Asymptotic}}

In this section, we establish the asymptotic validity and the asymptotic ranking. The framework, or definitions of validity and optimality, are developed along the proofs. The results are shown under the mildest conditions, the convergence of the lfdr estimates, without requiring stringent assumptions.

\begin{proposition} (Asymptotic Validity.) To establish that $\PP \left(\mbox{FDP} \geq \gamma + \epsilon \right)\leq\alpha + o(1)$, for any $\epsilon > 0$, under Assumption 4.3 of Basu et al.~(2018).
\end{proposition}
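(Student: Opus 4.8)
The plan is to transfer the exact control of Proposition~1 from the oracle to the data-driven rule by showing the plug-in estimation error is asymptotically negligible, with the $\epsilon$-slack absorbing the one place where it need not vanish. I would work with Procedure~1 run on the estimated statistics $\hat T_i$ in place of $T_i^{OR}$; Procedure~2 is equivalent to it by the computational-shortcut argument of Section~3, which applies verbatim to $\hat T$. For each $k$ let $\hat{\mathcal R}_k$ be the set of the $k$ hypotheses with smallest $\hat T_i$, let $\hat Q_k:=\PP\big(\PBD(k,(\hat T_i)_{i\in\hat{\mathcal R}_k})>\gamma k\big)$ be the tail probability the procedure thresholds on, and let $Q_k:=\PP_{\theta|\Z}\big(\sum_{i\in\hat{\mathcal R}_k}(1-\theta_i)>\gamma k\big)$ be the \emph{true} conditional tail probability of the false-rejection count. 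The first step is to note that Lemma~1 holds with $S_{\Z}=\hat{\mathcal R}_k$: its proof uses only that the index set is a function of $\Z$ together with the factorization of the joint density and independence of the $\theta_i$, none of which is disturbed by ranking on estimated rather than true lfdr. Hence, conditionally on $\Z$, $\sum_{i\in\hat{\mathcal R}_k}(1-\theta_i)\sim\PBD\big(k,(T_i^{OR})_{i\in\hat{\mathcal R}_k}\big)$, so $Q_k$ and $\hat Q_k$ are tails of two Poisson binomials of equal length differing only through $\hat T_i\leftrightarrow T_i^{OR}$.

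Next I would fix a constant $M$ and split on the realized number of rejections $\hat K$. On $\{\hat K\le M\}$, the standard coupling bound for Poisson binomials gives $\big|\hat Q_{\hat K}-Q_{\hat K}\big|\le\sum_{i\in\hat{\mathcal R}_{\hat K}}|\hat T_i-T_i^{OR}|\le M\sup_i|\hat T_i-T_i^{OR}|$; since the procedure enforces $\hat Q_{\hat K}\le\alpha$ and Assumption~4.3 of Basu et al.\ (2018) yields uniform consistency $\sup_i|\hat T_i-T_i^{OR}|=o_{\PP}(1)$, we get $\PP_{\theta|\Z}(\mathrm{FDP}>\gamma)=Q_{\hat K}\le\alpha+o_{\PP}(1)$ on this event, and $\{\mathrm{FDP}\ge\gamma+\epsilon\}$ is a sub-event. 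On $\{\hat K>M\}$ I would argue by concentration instead: since $\{\PBD(\hat K,(\hat T_i))>\gamma\hat K\}$ has probability $\le\alpha<1$, a Hoeffding lower-tail bound forces $\tfrac1{\hat K}\sum_{i\in\hat{\mathcal R}_{\hat K}}\hat T_i\le\gamma+O(\hat K^{-1/2})\le\gamma+O(M^{-1/2})$; by uniform consistency the true average $\tfrac1{\hat K}\sum_{i\in\hat{\mathcal R}_{\hat K}}T_i^{OR}$ is then $\le\gamma+\epsilon/2$ with probability $1-o(1)$ once $M$ is large, and on that event Hoeffding's upper-tail bound applied to $\mathrm{FDP}=\PBD(\hat K,(T_i^{OR}))/\hat K$ gives $\PP_{\theta|\Z}(\mathrm{FDP}\ge\gamma+\epsilon)\le e^{-\epsilon^2\hat K/2}\le e^{-\epsilon^2 M/2}$. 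Collecting the two regimes, taking $\EE_{\Z}$ (legitimate since everything lies in $[0,1]$), sending $m\to\infty$ and then $M\to\infty$ gives $\limsup_m\PP(\mathrm{FDP}\ge\gamma+\epsilon)\le\alpha+e^{-\epsilon^2 M/2}$ for every $M$, hence $\PP(\mathrm{FDP}\ge\gamma+\epsilon)\le\alpha+o(1)$. The randomized last rejection of Procedure~1 alters $\hat K$ by at most one and the FDP by at most $1/\hat K$, which is absorbed by the $\epsilon$-buffer when $\hat K$ is large and by the $o_{\PP}(1)$ term when $\hat K\le M+1$.

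The main obstacle I anticipate is the small-$\hat K$ regime: there the $\epsilon$-slack gives no help through concentration, so the argument leans entirely on consistency of the individual estimated lfdr's of the \emph{top-ranked} hypotheses, and one must verify that Assumption~4.3 really delivers uniform (sup-norm) control of $|\hat T_i-T_i^{OR}|$ and not merely an averaged $\tfrac1m\sum_i|\hat T_i-T_i^{OR}|\to0$; with only the averaged version one would additionally need a stability argument guaranteeing that the few smallest-$\hat T$ hypotheses coincide asymptotically with the few smallest-$T^{OR}$ ones near the cutoff. A secondary point is making the ``$\limsup$ in $m$, then $M\to\infty$'' bookkeeping precise, which is why $M$ must be introduced as a fixed constant before letting $m\to\infty$, with the $o_{\PP}(1)$ terms controlled uniformly over $\{\hat K\le M\}$.
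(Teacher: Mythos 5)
Your decomposition differs from the paper's in both regimes, and the gap you flag at the end is real. Assumption 4.3 of Basu et al.~(2018), as the paper uses it, delivers only the averaged consistency $m^{-1}\sum_i |T_i^{OR}-T_i^{DD}| = o_p(1)$ (the proof cites Appendix B.2 of that reference for precisely this), not the sup-norm control $\sup_i|\hat T_i - T_i^{OR}| = o_{\PP}(1)$ that your small-$\hat K$ branch needs; an averaged bound says nothing about the estimation error on a selected set of $O(1)$ top-ranked indices, so that branch does not close as written. The paper's proof removes the small-$\hat K$ regime rather than handling it: it shows that for $\epsilon'\approx\gamma\alpha$ the sublevel set $\{i: T_i^{DD}\le\epsilon'\}$ satisfies the FDX criterion at every $k$, via the relative-entropy bound $\PP\{\mathrm{Bin}(k,\epsilon')\ge\gamma k\}\le e^{-kH(\gamma,\epsilon')}$, hence the step-up rule rejects at least that many hypotheses and $m/\hat K=[\EE\,\mathcal{I}(T^{OR}\le\epsilon')+o_p(1)]^{-1}=O_p(1)$. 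With $\hat K$ of order $m$, the paper then couples the oracle and data-driven Bernoulli indicators and applies Markov's inequality to the positive part of the difference of the two normalized false-rejection counts, obtaining $\PP(X_K/K - X_K'/K\ge\epsilon)\le \epsilon^{-1}(m/K)\cdot m^{-1}\sum_i|T_i^{OR}-T_i^{DD}| = o_p(1)$, while the first term $\PP(X_K'>\gamma K)\le\alpha$ holds by construction. That single Markov step replaces both of your branches and is exactly where the $\epsilon$-slack is spent.

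Your large-$\hat K$ branch (Hoeffding forcing the selected average of $\hat T$ below $\gamma+O(\hat K^{-1/2})$, then an upper-tail bound on the true Poisson binomial) is correct and arguably more transparent than the paper's Markov bound, but it too silently invokes uniform consistency to transfer the average from $\hat T$ to $T^{OR}$ over the selected set; it becomes rigorous once you import the paper's lower bound on $\hat K$, at which point averaged $L_1$ consistency suffices there as well. So the one substantive idea your proposal is missing is the quantitative lower bound on the number of rejections in terms of $(\gamma,\alpha)$; the rest is bookkeeping.
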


\begin{proof}

Recall that, by design, the oracle Procedure 1 ensures
\[
\PP_{\theta|\Z} \left\{\sum_{i\in\mathcal R_K^{OR}} (1-\theta_i) > \gamma K\right\} \leq \alpha. 
\]
Note that the oracle lfdr statistic $(T^{OR}_i)_{i\in[m]}$ are estimated by the data-driven test statistics $(T^{DD}_i)_{i\in[m]}$. And our data-driven procedure ensures $\PP_{\widehat{\theta|\Z}} \left\{\sum_{i\in\mathcal R_K} (1-\theta_i) > \gamma K\right\} \leq \alpha$, where note that $K$ is determined by the data and we avoid the superscript `DD' for notational simplicity. We show that $\PP \left(\mbox{FDP} \geq \gamma + \epsilon \right)\leq\alpha + o(1)$ for any $\epsilon > 0$.

First, we take a brief excursion. Consider $(Y_i, Y_i')$ as Bernoulli random variables with $P(Y_i = 0) = q_i$ and $P(Y_i' = 0) = q_i'$. Let $X_K = \sum_{i\in\mathcal R_K} (1-Y_i)$ and $X_K' = \sum_{i\in\mathcal R_K} (1-Y_i')$. Note that the index set $R_K$ may, and in our case will, depend on the $q_i$ or $q_i'$ and the probabilities below are conditional on $q_i$ and $q_i'$. Then,
%
\begin{align*}
P(X_K/K \geq \gamma + \epsilon) & = P(X_K/K \geq \gamma + \epsilon, X_K'/K > \gamma)  + P(X_K/K \geq \gamma +\epsilon, X_K'/K \leq \gamma) \\
& \leq P(X_K' > \gamma K)  + P(X_K/K -  X_K'/K \geq \epsilon), 
\end{align*}
where the joint probabilities are defined on the coupled events retaining the marginal distributions. That is, without loss of generality, let the joint distribution be $(Y_i = 0,  Y_i' = 0)$ with probability $q_i$, $(Y_i = 1,  Y_i' = 0)$ with probability $q_i' - q_i$ when $q_i' \geq q_i$, $(Y_i = 0,  Y_i' = 1)$ with probability $0$, and $(Y_i = 1,  Y_i' = 1)$ with probability $1-q_i'$. 

Now we analyze the scaled differences, for the $\epsilon > 0$,
\begin{align*}
P(X_K/K -  X_K'/K \geq \epsilon) =& P((\sum_{i\in\mathcal R_K} 1)^{-1} \left \{ \sum_{i\in\mathcal R_K} (1-Y_i) - \sum_{i\in\mathcal R_K} (1-Y_i') \right \} \geq \epsilon)\\
=& P((\sum_{i\in\mathcal R_K} 1)^{-1} \left \{ \sum_{i\in\mathcal R_K} (1-Y_i) - \sum_{i\in\mathcal R_K} (1-Y_i') \right \}_+ \geq \epsilon)\\
\leq& \epsilon^{-1}E((\sum_{i\in\mathcal R_K} 1)^{-1} \left \{ \sum_{i\in\mathcal R_K} (1-Y_i) - \sum_{i\in\mathcal R_K} (1-Y_i') \right \}_+)\\
\leq& \epsilon^{-1}(\sum_{i\in\mathcal R_K} 1)^{-1} E (\sum_{i\in\mathcal R_K} \left \{ (1-Y_i) - (1-Y_i') \right \}_+),
\end{align*}
where we use that for any random variable $\{X \geq \epsilon\} \Leftrightarrow \{X_+ \geq \epsilon\}$, thereby enabling us to apply Markov's inequality on the positive part. Thus we obtain that,
\begin{equation} \label{sample_prob_bound}
P(X_K/K -  X_K'/K \geq \epsilon) \leq \epsilon^{-1}(\sum_{i\in\mathcal R_K} 1)^{-1} \sum_{i\in\mathcal R_K}  \{ (q_i - q_i')_+ \}.
\end{equation}

Let's reconnect to your procedure. Here $\{Y_i = 0\} \equiv \{\theta_i = 0 | \Z\}$ and $q_i \equiv T^{OR}_i$ and $q_i' \equiv T^{DD}_i$. And the set $\mathcal R_K$ is the set of rejected hypotheses using the data-driven procedure. Note our procedure guarantees that for a random $K$, we have $P(X_K'/K > \gamma) \leq \alpha$. To complete the proof, we need to show $EP(X_K/K -  X_K'/K \geq \epsilon) = o(1)$ or we will show that $E(\epsilon^{-1}(\sum_{i\in\mathcal R_K} 1)^{-1} \sum_{i\in\mathcal R_K}  \{ (q_i - q_i')_+ \}) = o(1)$. Note that (\ref{sample_prob_bound}) is bounded by $\epsilon^{-1}$. Thus if we can show that $(\sum_{i\in\mathcal R_K} 1)^{-1} \sum_{i\in\mathcal R_K}  \{ (q_i - q_i')_+ \} = o_P(1)$ we will be done. Consider $m (\sum_{i\in\mathcal R_K} 1)^{-1}$, which is  
\[
[m^{-1} \sum_i \mathcal{I}_{T^{DD}_i \leq \lambda_{DD}}]^{-1} \leq [m^{-1} \sum_i \mathcal{I}_{T^{DD}_i \leq \epsilon'}]^{-1} = [E(T^{OR} \leq \epsilon') + o_p(1)]^{-1} = O_p(1),
\]
where $\epsilon' > 0$ and depends on $(\gamma, \alpha)$ as illustrated more shortly. Similarly
\[
m^{-1}\sum_{i\in\mathcal R_K}  \{ (T_i^{OR} - T_i^{DD})_+ \} \leq m^{-1} \sum_{i} |T_i^{OR} - T_i^{DD}| = o_p(1)
\]
similar to proofs in Basu et al.~(2018) in Appendix B.2.

We now illustrate how to obtain a lower bound depending on $(\gamma, \alpha)$ for our data-driven procedure. We formulate it in the size of $\mathcal R_m^{\epsilon'}$ where $\mathcal R_m^{\epsilon'} := \{i \in [m]: T^{DD}_i \leq \epsilon' \}$. We need to check that for all $k = 1, 2, 3, \cdots, m$ it holds that $P(X_k' > \gamma k) \leq \alpha$. To find an $\epsilon'$ note that 
\begin{equation} \label{bin_large_deviation}
P(X_k' > \gamma k) \leq P(X_k' \geq \gamma k) \leq P(Bin(k, \epsilon') \geq \gamma k) \leq \exp\{-kH\} \leq \exp\{-H\},
\end{equation}
where $H(\gamma, \epsilon') := \gamma \ln(\gamma/\epsilon') + (1-\gamma) \ln((1-\gamma)/(1-\epsilon'))$ denotes the `relative entropy' between an $\epsilon'$-coin and a $\gamma$-coin (See Theorem 1 of Arratia and Gordon, 1989) where $0 < \epsilon' < \gamma < 1$. Note that as $\epsilon' \rightarrow 0$, the relative entropy grows and thus the right-hand side of (\ref{bin_large_deviation}) converges to zero. Therefore, one can easily choose an $\epsilon'$ to ensure that $\exp\{-H\} \leq \alpha$ for any $\alpha > 0$. We also remark that the entropy bound is useful for larger $k$ and thus, for all practical purposes, a choice of $\epsilon' \approx \gamma * \alpha$ suffices as a lower bound for the threshold of our data-driven procedure.

\end{proof}

\begin{proposition} (Asymptotic Optimal Ranking.) Under assumption 4.3 of Basu et al.~(2018), the proposed procedure using the estimated lfdr has an asymptotically optimal ranking. \end{proposition}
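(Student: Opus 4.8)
The plan is to bootstrap the finite-sample optimality of the oracle \emph{lfdr} ranking (Proposition 2) up to the asymptotic regime by exploiting the convergence of the estimated \emph{lfdr} to the oracle \emph{lfdr} guaranteed by Assumption 4.3 of Basu et al.~(2018), namely $m^{-1}\sum_i |T^{DD}_i - T^{OR}_i| = o_P(1)$. First I would make the notion of asymptotic optimality precise: the data-driven ranking is asymptotically optimal at level $(\gamma,\alpha)$ if, for every sequence of competing decision rules $\pmb\delta^{(m)}$ with $\mathrm{FDX}\le\alpha$, there is a thresholding rule along the $T^{DD}$-ranking whose FDX is bounded by $\alpha+o(1)$ (this is exactly Proposition 4, the asymptotic validity result) and whose ETP exceeds $\mathrm{ETP}(\pmb\delta^{(m)}) - o(m)$; equivalently, the per-test power gap vanishes.

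Next, by Proposition 2 the \emph{oracle} \emph{lfdr} thresholding rule dominates any competitor in finite samples, so it suffices to show that the data-driven procedure, which ranks by $T^{DD}_i$ and stops at some index $K^{DD}$, produces a rejection set $\mathcal R^{DD}$ that differs from the oracle rejection set $\mathcal R^{OR}$ by only $o_P(m)$ hypotheses. I would establish this in two pieces. (i) The stopping index is stable: the criteria in Steps 2--4 of Procedure 2 are governed by cumulative averages of the \emph{lfdr} statistics, and since the Poisson--binomial tail probability in Step 4 is Lipschitz in the $\ell_1$-norm of its Bernoulli parameter vector (total variation between two Poisson--binomials is bounded by the $\ell_1$ distance of their parameters), the $o_P(1)$ average convergence of the estimates transfers to an $o_P(1)$ fluctuation of the stopping criterion, forcing $K^{DD}/m$ and $K^{OR}/m$ to share a common limit (their thresholds $\lambda_{DD},\lambda_{OR}$ coincide in the limit). (ii) For a fixed threshold level, the symmetric difference of the two rejection sets satisfies, for any $\eta>0$,
\[
\frac{1}{m}\bigl|\mathcal R^{DD}\,\triangle\,\mathcal R^{OR}\bigr|\;\le\;\frac{1}{m}\#\{i:\ |T^{OR}_i-\lambda_{OR}|\le\eta\}\;+\;\frac{1}{m}\#\{i:\ |T^{DD}_i-T^{OR}_i|>\eta\},
\]
where the second term is $o_P(1)$ by Markov's inequality and Assumption 4.3, and the first term can be made arbitrarily small by taking $\eta\downarrow 0$, provided the limiting distribution of the oracle \emph{lfdr} places no atom at $\lambda_{OR}$ (a mild continuity condition I would take from, or append to, Assumption 4.3). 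Hence $|\mathcal R^{DD}\triangle\mathcal R^{OR}|=o_P(m)$, so $\mathrm{ETP}(\pmb\delta^{DD})=\mathrm{ETP}(\pmb\delta^{OR})+o_P(m)$, and combining with Proposition 2 gives the claim.

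The main obstacle I anticipate is precisely the boundary behavior at the cutoff: the $\ell_1$ perturbation bound only controls the \emph{number} of misranked hypotheses, not their contribution to the ETP if the alternative density concentrates mass near $\lambda_{OR}$, and one must rule out that a tiny shift in the estimated statistics triggers a discontinuous jump in $K^{DD}$. I would dispatch this by combining the Lipschitz property of the Poisson--binomial tail with the no-atom continuity assumption on the limiting \emph{lfdr} distribution, which together ensure that the normalized number of rejections is continuous in the threshold and hence that $\mathcal R^{DD}$ and $\mathcal R^{OR}$, and therefore their ETPs, agree up to $o_P(m)$; everything else in the argument is a routine transfer of the finite-sample optimality already proved.
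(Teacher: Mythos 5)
Your overall strategy---transfer the finite-sample optimality of the oracle ranking to the data-driven rule by showing the two rejection sets nearly coincide---is not the route the paper takes, and the two steps you lean on most heavily are exactly the ones that fail or require assumptions the paper deliberately avoids. First, your stability claim (i) rests on the total-variation bound $d_{TV}\bigl(\PBD(k,\bm p),\PBD(k,\bm p')\bigr)\le\sum_{i\le k}|p_i-p_i'|$. That is the \emph{unnormalized} $\ell_1$ distance: under Assumption 4.3 you only control the \emph{average} $m^{-1}\sum_i|T^{DD}_i-T^{OR}_i|=o_P(1)$, so for $k$ of order $m$ the bound is $m\cdot o_P(1)$, which does not vanish. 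The fluctuation of the Step 4 tail probability is therefore not $o_P(1)$, and you cannot conclude that $K^{DD}$ and $K^{OR}$ share a common limit. Second, your no-atom condition at $\lambda_{OR}$ is an additional assumption; the paper explicitly advertises that its result holds ``under the mildest conditions, the convergence of the lfdr estimates,'' and its closing remark concedes that equivalence of the proposed and a relaxed procedure is ``nontrivial to establish'' without such extra assumptions---i.e., the step you are assuming away is the hard one.

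The paper sidesteps both problems with two devices you do not use. For the ranking it runs a regret decomposition of $m^{-1}\sum_i(1-lfdr_i)(\delta^{OR_k}_i-\delta^{DD_k}_i)$ into three terms, the middle of which is non-positive \emph{by construction} (the data-driven top-$k$ rule maximizes $\sum_i(1-\widehat{lfdr_i})\delta_i$ over all top-$k$ rules), leaving only two terms each bounded by $m^{-1}\sum_i|lfdr_i-\widehat{lfdr_i}|=o_P(1)$; no matching of rejection sets is needed. For the threshold it introduces a \emph{relaxed} data-driven procedure at level $\alpha+\eta_m$ with $\eta_m\to 0$, shows $K^{DDr}\ge K^{OR}$ with high probability via the coupling and Markov argument of Proposition 4 (which works with the scaled count $X_K/K$ and an $\epsilon$-relaxation of the FDP level, precisely because the unscaled TV bound is useless), and separately verifies that the relaxed procedure remains asymptotically valid. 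If you want to salvage your set-coincidence argument you would need to either strengthen Assumption 4.3 to a uniform (sup-norm) control of $|T^{DD}_i-T^{OR}_i|$ plus the continuity condition, or adopt the paper's relaxation device.
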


\begin{proof} 

The proof is done in two parts. First, we show that selecting top `k' by the estimated lfdr test statistics isn't asymptotically suboptimal compared to the oracle lfdr test statistics. Next, we propose a relaxed data-driven procedure, in practice equivalent to the proposed procedure, which is shown to be asymptotically valid. And then, we offer that thresholding by the relaxed data-driven approach is asymptotically optimal. 

Consider ranked by either top-k local FDR or the estimated local FDR test statistics. Our target power is $ETP := E\sum_i(\theta_i \delta_i)$. Let $k(Z)$ potentially be dependent on the data be the top-k rejected hypotheses, wherein in one procedure, the oracle lfdr ranks the hypotheses. In another, the estimated lfdr statistic ranks them. Define the expected regret in scaled power as $m^{-1} E \{ \sum_i \theta_i \delta^{OR_k}_i - \sum_i \theta_i \delta^{DD_k}_i\}$ which conditional on the data reduces to $m^{-1} E \{ \sum_i (1-lfdr_i) \delta^{OR_k}_i - \sum_i (1-lfdr_i) \delta^{DD_k}_i\}$. Decomposing,
\begin{align*}&m^{-1}\text{Regret}(\delta^{OR_k}, \delta^{DD_k}) = m^{-1}\{ \sum_i (1-lfdr_i) \delta^{OR_k}_i - \sum_i (1-lfdr_i) \delta^{DD_k}_i\} \\&= m^{-1} \{ \sum_i (1-lfdr_i) \delta^{OR_k}_i - \sum_i (1-\widehat{lfdr_i}) \delta^{OR_k}_i\} + m^{-1} \{ \sum_i (1-\widehat{lfdr_i}) \delta^{OR_k}_i \\& - \sum_i (1-\widehat{lfdr_i}) \delta^{DD_k}_i\} + m^{-1} \{ \sum_i (1-\widehat{lfdr_i}) \delta^{DD_k}_i - \sum_i (1-lfdr_i) \delta^{DD_k}_i\}\\& \leq m^{-1} \{ \sum_i (1-lfdr_i) \delta^{OR_k}_i - \sum_i (1-\widehat{lfdr_i}) \delta^{OR_k}_i\} \\& + m^{-1} \{ \sum_i (1-\widehat{lfdr_i}) \delta^{DD_k}_i - \sum_i (1-lfdr_i) \delta^{DD_k}_i\} \\&=o_P(1),\end{align*}where $m^{-1} \{ \sum_i (1-\widehat{lfdr_i}) \delta^{OR_k}_i - \sum_i (1-\widehat{lfdr_i}) \delta^{DD_k}_i\}$ is non-positive as $\delta^{DD_k}_i$ is ranked by the lower to higher values of $\widehat{lfdr_i}$ and hence higher to lower values of $(1-\widehat{lfdr_i})$. 

The size of the rejected set is related to the thresholding statistic. In this regard, we would like to introduce the idea of a relaxed data-driven procedure. Choose a (deterministic) sequence of $\eta_m \rightarrow 0$ which isn't related to the data. We define a relaxed data-driven procedure as $K^{DDr} := \sup \{k \in [m]: \PP_{\widehat{\theta|Z}} \left(\mbox{FDP(k)} \geq \gamma + \epsilon \right)\leq\alpha + \eta_m \text{ for all } \epsilon > 0\}$. Towards the end of this proof, we establish that this relaxed data-driven procedure is also asymptotically valid.
We establish that the size of the rejected set by the relaxed data-driven procedure is, with high probability, higher than that chosen by the oracle procedure with a growing number of tests. In other words, we show that $K^{DDr} \geq K^{OR}$ with high probability and with increasing $m$. To establish this, we show that $K^{OR}$ satisfies the relaxed data-driven criterion with high probability. For any $\varepsilon > 0$, we show that $\PP_{\widehat{\theta|Z}} \left(\mbox{FDP($K^{OR}$)} \geq \gamma + \epsilon \right) \leq \alpha + \eta_{m}$ for $m \geq m_0$. 

Following the thoughts in the proof of Proposition 4, \begin{align*} &P(X_K/K \geq \gamma + \epsilon) = P(X_K/K \geq \gamma + \epsilon, X_K'/K > \gamma)  + P(X_K/K \geq \gamma +\epsilon, X_K'/K \leq \gamma) \\& \leq P(X_K' > \gamma K)  + P(X_K/K -  X_K'/K \geq \epsilon) \leq \alpha + P(X_K/K -  X_K'/K \geq \epsilon).\end{align*}
Further an $m_0$ is chosen such that for all $m \geq m_0$, we have that,
\[
\PP \{ P(X_K/K -  X_K'/K \geq \epsilon) > \eta_{m} \} < \varepsilon',
\]
for any choice of $\varepsilon' > 0$. Thus $P(K^{DDr} \geq K^{OR}) \geq 1 - \varepsilon'$. This is true if $E\{T_i^{OR} - T_i^{DD}\}/\eta_m \rightarrow 0.$ Thus, the regret for the relaxed data-driven procedure can be expressed as:\begin{align*}&m^{-1}\text{Regret}(\delta^{OR_k}, \delta^{DDr}) = m^{-1}\{ \sum_i (1-lfdr_i) \delta^{OR_k}_i - \sum_i (1-lfdr_i) \delta^{DDr}_i\} \\&\leq m^{-1}\{ \sum_i (1-lfdr_i) \delta^{OR_k}_i - \sum_i (1-lfdr_i) \delta^{DD_k}_i\} = m^{-1}\text{Regret}(\delta^{OR_k}, \delta^{DD_k}),\end{align*}whenever $K^{DDr} \geq K^{OR}$. Further note that if $K^{DDr} < K^{OR}$, the scaled regret is bounded by 1. Thus $m^{-1}E\text{Regret}(\delta^{OR_k}, \delta^{DDr}) = o(1)$. And further, as the relaxed data-driven procedure makes more rejections than the proposed data-driven procedure, and as argued in proposition 4, $[m^{-1} \sum_i \mathcal{I}_{T^{DD}_i \leq \lambda_{DDr}}]^{-1} \leq [m^{-1} \sum_i \mathcal{I}_{T^{DD}_i \leq \lambda_{DD}}]^{-1} = O_p(1)$. Thus we have established that $E \sum_i \theta_i \delta^{OR}_i = \{E \sum_i \theta_i \delta^{DDr}_i\}(1+o(1))$. This establishes the asymptotic optimality in ranking by the estimated local FDR statistic. 


We conclude the proof by establishing that the relaxed data-driven procedure is asymptotically valid, thereby enabling its theoretical uses. To see this, fix any $\varepsilon > 0$, 
\begin{align*}
&P(X_K/K \geq \gamma + \varepsilon) \\ & = P(X_K/K \geq \gamma + \varepsilon, X_K'/K \geq \gamma + \varepsilon/2)  + P(X_K/K \geq \gamma +\varepsilon, X_K'/K < \gamma + \varepsilon/2) \\
& \leq P(X_K'/K \geq \gamma + \varepsilon/2)  + P(X_K/K -  X_K'/K > \varepsilon/2) \\
& \leq \alpha + \eta_m + P(X_K/K -  X_K'/K > \varepsilon/2).
\end{align*}
Following the proof of Proposition 4, we have that $EP(X_K/K -  X_K'/K > \varepsilon/2) = o(1)$ under the usual condition that $ |T_i^{OR} - T_i^{DD}| = o_p(1)$. Thus $\PP \left(\mbox{FDP} \geq \gamma + \epsilon \right)\leq\alpha + o(1)$ thereby ensuring the validity of the relaxed data-driven procedure.
\end{proof}

\begin{remark}
Note that without strict assumptions, it is nontrivial to establish that the proposed and the relaxed data-driven procedures are equivalent. The power of the proposed method can be improved using the theoretical underpinnings of the relaxed data-driven approach. However, the relaxed data-driven procedure is practically less appealing. For our purpose of proof, our aim here is to establish the asymptotic optimality of the test statistic as a ranking statistic. This has been confirmed by slightly inflating the cut-off or the thresholding while keeping the ranking test statistic intact and, most importantly, without introducing additional assumptions.
\end{remark}

\section{A Counter Example Regarding Ranking Optimality} \label{Sec.Counterexample}

Using a numerical experiment, we demonstrate that the \lfdr \ ranking may not be optimal when the tests are not exchangeable. We follow the setting described in Section 5.2 of \cite{heller2021optimal} to provide a counterexample for the best ranking for FDX control. Ten z-scores are generated from the two-group model with $\theta_i \sim Bernoulli(0.3)$ independently. Further $\Z|\pmb \theta \sim N(-1.5*\pmb \theta, \Sigma + 0.01 * diag(\pmb \theta))$. Because the computational burden grows exponentially, we simulate only 10 test statistics and work with  $\gamma = 0.5$. Since for independent (or exchangeable) $z$-values, we obtain an optimal ranking (by sorting on \emph{lfdr}) regardless of the size of $\gamma$, the framework serves as a good counterexample. 

We work with a block diagonal matrix $\Sigma$ with two blocks, each being equicorrelated with a varied choice of $\rho$. We run  our experiment 200 times. Each time we consider ranking the indices by the \emph{lfdr}, which is the optimal ranking for exchangeable tests, and we select the indices with the lowest two values. We then divide the hypotheses into the two blocks in  $\Sigma$, which are known to the oracle. From the two blocks, we further select the top indices with the minimum \lfdr \ values. We find this to be the most intuitive way to counter the dependencies within blocks and potentially make a better rejection due to the positive dependence within the blocks. Table \ref{Tbl.Counterexample} reports the percentage of times that the \lfdr \ ranking provides higher tail probabilities than the ranking based on the \lfdr \ values within blocks separately. The values are reported for varied values of $\rho$ in percentage for 200 experimental runs. If we look at $\rho = 0.3$ in the table, 13\% of the time, selecting the top indices from the two blocks separately produces a lower value for the tail probability than selecting the top indices from all the indices combined. These differences in the tail probabilities can be high. For example, consider $\rho = 0.5$:~the tail probability is higher by an average of 16.7\%. This illustrates that the \lfdr \ ranking can be suboptimal when the exchangeability assumption fails to hold.

\begin{boxedtable}{Counterexample to the optimal \emph{lfdr} ranking}{Tbl.Counterexample}
\tablexplain{The table reports the proportion of time the tail probability was lower for an alternative ranking than for the \lfdr \ ranking in Procedure 2. The \lfdr \ ranking may be substantially improved for moderately correlated test statistics by incorporating the correlation structure into account. This demonstrates that the \lfdr \ ranking and thresholding procedure may not be optimal for non-exchangeable tests where the correlation structure is unknown to a practitioner.}
\begin{ctabular}{c c c c c c c c} 
\midrule 
$\rho $ & 0.01 & 0.1 & 0.3 & 0.5 & 0.7 & 0.9 \\ [0.5ex] 
\midrule 
Contradictory tail probabilities (in \%) & 0 & 4.5 & 13 & 16 & 21.5 & 31 \\ 
\midrule 
\end{ctabular}
\end{boxedtable}

{\blue \section{Completed Settings for Numerical Experiments}}

In this section, we present some additional settings for different choices of parameters for the numerical experiment setting discussed in Section 5.1. First, we demonstrate a scenario when $\pi = 0$ and discuss the findings.

\begin{boxedtable}{[Table 1 for $\pi = 0$] Comparison of different procedures}{Tbl.Comparison.2}
\tablexplain{The table compares the performance of Procedure 2 relative to some popular methods:~SC, BH, GR, and LR. Three versions of Procedure 2 are implemented:~the oracle version (Oracle), a version where the parameters of the data generating process are estimated from the data ($lfdr$), and a version where we impose the assumption that $\pi = 0$, $lfdr(\hat{\pi} = 0)$. The data-generating process is a mixture model, where with probability $1-\pi = 1$ the test is drawn from a $\mathcal{N}(0,1)$ (null) where $\pi = 0$. The simulation considers 5000 tests. We repeat the exercise for 1000 simulations. FDX control is implemented at $\gamma = 0.05$ and with a confidence level $1-\alpha = 0.95$. FDR control is implemented at a nominal level of $\alpha = 0.05$. } 

\begin{ctabular}{l l rrrrrrr} 
\midrule
&&&&&&\multicolumn{3}{c}{Procedure 2}\\
\cmidrule{7-9}
& & SC & BH &  GR & LR & Oracle & $lfdr$ & $lfdr(\hat{\pi} = 0)$\\  
\midrule
$\pi = 0$ & FDX & 0.05 & 0.044 & 0.069 & 0.068 & 0 & 0.05 & 0.049\\ 
& FDR & 0.05 & 0.044 & 0.069 & 0.068 & 0 & 0.05 & 0.049\\ 
& Power (\%) & - & - & - & - & - & - & -\\ 
\midrule 
\end{ctabular}
\end{boxedtable}

All methods control the FDR/FDX at nearly 0.05. Note that since no tests come from the alternative (i.e., $\pi=0$), then any rejection that happens is false, and hence if there are rejections, the FDP is precisely 1. If there are no rejections, the FDP is zero by definition. It follows that the FDX (probability of FDP being above the threshold) equals the FDR. Further, for the oracle procedure, the lfdr is identically one for all tests. Thus, the procedure cannot make any null rejections, resulting in the FDP being identically zero, which makes both FDX and FDR equal to zero.

Next, we complete the choices of parameters that still need to be reported in Table 1. All the primary insights remain the same, as discussed in Section 5.1.

\begin{boxedtable}{[Expanded Table 1] Comparison of different procedures}{Tbl.Comparison.3}
\tablexplain{The table compares the performance of Procedure 2 relative to some popular methods:~SC, BH, GR, and LR. Three versions of Procedure 2 are implemented:~the oracle version (Oracle), a version where the parameters of the data generating process are estimated from the data ($lfdr$), and a version where we impose the assumption that $\pi = 0$, $lfdr(\hat{\pi} = 0)$. The data-generating process is a mixture model, where with probability $1-\pi$ the test is drawn from a $\mathcal{N}(0,1)$ (null), and with probability $\pi$ the test is drawn from $\mathcal{N}(\mu, 1)$ (alternative), where $\pi \in \{0.1, 0.3\}$ and $\mu \in \{-1.5, -2.5\}$. Each simulation considers 5000 tests. We repeat the exercise for 1000 simulations. FDX control is implemented at $\gamma = 0.05$ and with a confidence level $1-\alpha = 0.95$. FDR control is implemented at a nominal level of $\alpha = 0.05$. } 
\begin{ctabular}{l l rrrrrrr} 
\midrule
&&&&&&\multicolumn{3}{c}{Procedure 2}\\
\cmidrule{7-9}
& & SC & BH &  GR & LR & Oracle & $lfdr$ & $lfdr(\hat{\pi} = 0)$\\  
\midrule
$\pi = 0.1$ & FDX & 0.312 & 0.262 & 0.06 & 0.059 & 0.051 & 0.09 & 0.085\\ 
$\mu = -1.5$& FDR & 0.061 & 0.06 & 0.035 & 0.035 & 0.032 & 0.046 & 0.043\\ 
& Power (\%) & 1.1 & 1 & 0.3 & 0.3 & 0.3 & 0.4 & 0.3\\ 
\midrule 
$\pi = 0.1$ & FDX & 0.489 & 0.366 & 0.035 & 0.02 & 0.041 & 0.045 & 0.031\\ 
$\mu = -2.5$& FDR & 0.051 & 0.045 & 0.013 & 0.003 & 0.027 & 0.027 & 0.024\\ 
& Power (\%) & 35.9 & 33.9 & 13.6 & 4.6 & 25.2 & 25.5 & 23.6\\  
\midrule 
 $\pi = 0.3$& FDX & 0.470 & 0.243 & 0.037 & 0.036 & 0.056 & 0.071 & 0.05\\ 
$\mu = -1.5$& FDR & 0.049 & 0.036 & 0.01 & 0.009 & 0.015 & 0.018 & 0.013\\  
& Power (\%) & 7.7 & 4.6 & 0.3 & 0.3 & 1 & 1.4 & 0.6\\ 
\midrule
$\pi = 0.3$& FDX & 0.453 & 0.008 & 0.002 & 0.000 & 0.048 & 0.048 & 0.000\\ 
$\mu = -2.5$& FDR & 0.049 & 0.035 & 0.031 & 0.002 & 0.039 & 0.039 & 0.027\\  
& Power (\%) & 61.6 & 54.4 & 52.3 & 10.3 & 56.8 & 56.6 & 48.8\\ 
\midrule
\end{ctabular}
\end{boxedtable}

\section{Simulation of Stock Returns Trading Strategies}\label{Sec.FinSimulation}

In this section we present a simulated version of the empirical application presented in Section~\ref{Sec.Application} that follows the set up of \cite{chordia2020anomalies}. A total of $N = 2,\!000$ stock returns are generated for $T=500$ periods. Returns follow a linear factor structure 
\[ ret_{it} = \alpha_i + \beta_i' F_t + \varepsilon_{it}, \]
where the factors, $F$, are drawn from a multivariate normal distribution with the mean and the covariance matrix matching closely those of the five \cite{fama2015five} factor model, augmented with \cite{carhart1997persistence} momentum. For each stock, $\alpha$ represents the return that an investor could realize in excess of the risk generated by the factors, and is drawn from a $\mathcal{N}(0, \sigma_{\alpha}^2)$. The idiosyncratic noise follows $\varepsilon_{it} \sim \mathcal{N}(0, \sigma_{\varepsilon}^2)$ with $\sigma_{\varepsilon} = 15.1\%$. In each time period (i.e., month), we draw $S = 5,000$ trading signals for each stock:~a fraction $\pi$ of the trading signals are informative (although imperfectly so) about the $\alpha$ of each stock: $s_{it} = \alpha_i + \eta_{it}$, where $\eta_{it} \sim \mathcal{N}(0, \sigma_{\eta})$.  A fraction $1-\pi$ contain just noise:~$s_{it} = \eta_{it}$. Informative and uninformative trading signals might share some common noise through the correlation coefficient $\rho_\eta$ among signals.

A stock trader constructs $5,\!000$ trading strategies by sorting stocks at the end of each month based on the signals realizations. She forms 10 portfolios, buys the portfolio corresponding to the largest signals, shorts the portfolio corresponding to the lowest, and holds the position until the end of the month when she repeats the sorting procedure and forms new portfolios. The performance of these $5,\!000$ long-short portfolios is evaluated by regressing the time series of 500 portfolio return observations on the realizations of the factors. A $t$-statistics on the estimated regression intercept (i.e., the portfolio alpha or abnormal return) is used to evaluate each strategy in classical hypothesis testing, and altogether in multiple hypothesis testing. The entire simulation procedure is repeated $1,\!000$ times.

Figure~\ref{Fig.SimHist} provides a visual representation of the null and alternative distributions in scenarios that differ in the strength of the informative signal, $\sigma_\alpha \in \{1.1\%, 1.25\%, 1.5\% \}$. As the signal becomes stronger, the truly informative trading signals are more easily identifiable. However, even the most favorable scenario is rather complicated for a multiple hypothesis testing procedure as there is substantial overlap between null and alternative distributions.  

\begin{boxedfigure}{Simulation scenarios}{Fig.SimHist}
\tablexplain{The figure presents a visual comparison of different distributions of test statistics for the trading strategy simulation.  In the base case situation:~$\pi = 0.1$,  $\sigma_{\varepsilon} = 0.15$, $\sigma_\eta = 0.2$, and $\rho_\eta = 0$. The figure presents different simulations that vary $\sigma_\alpha \in \{1.1\%, 1.25\%, 1.5\% \}$.}
\begin{center}
\includegraphics[width=0.8\textwidth]{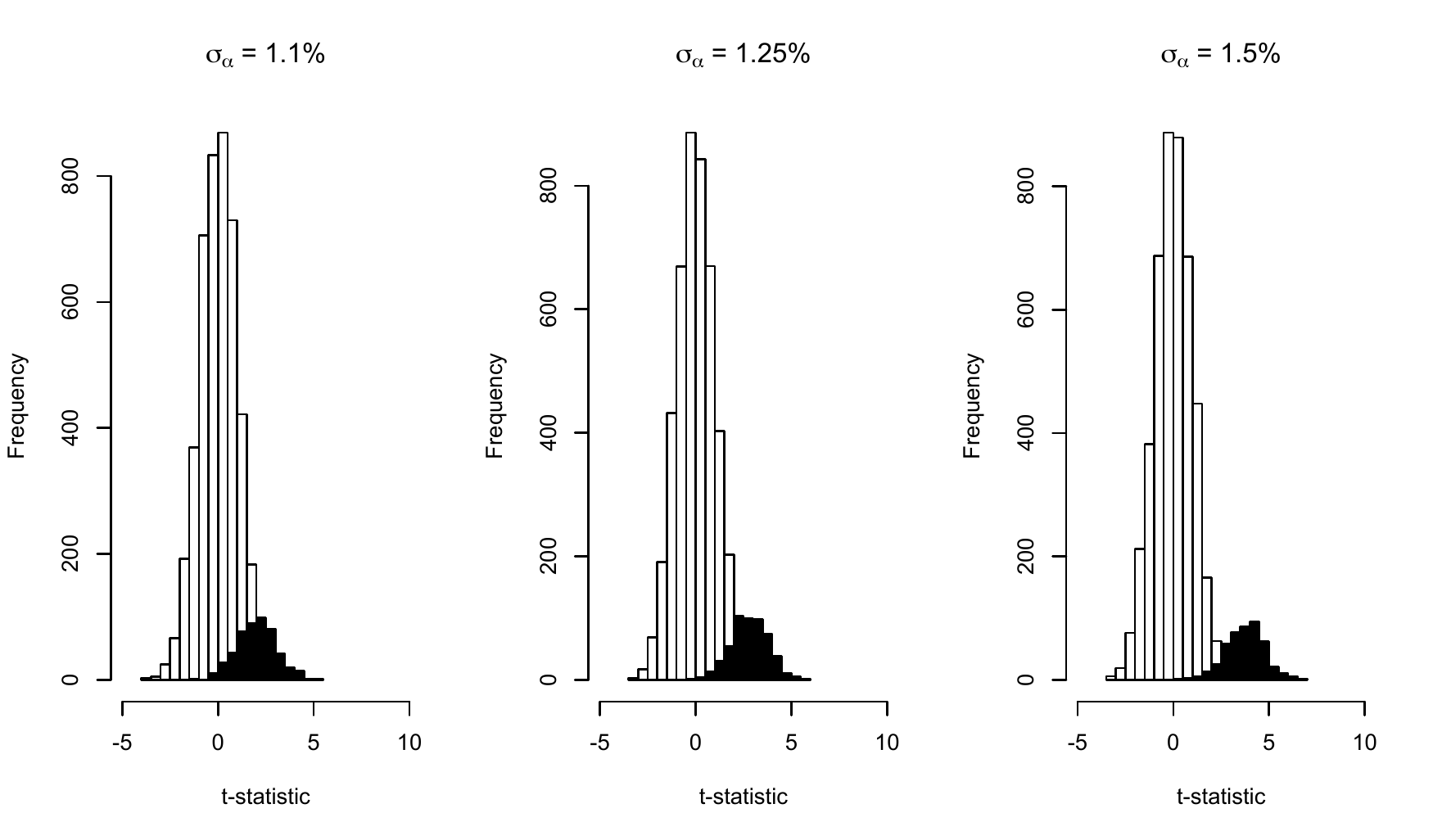}
\end{center}
\end{boxedfigure}

We analyze the simulated data under two scenarios that differ in whether the econometrician knows the true proportion of nulls and the parameters of the null distribution (Oracle) or has to estimate that from the data (Data-Driven). {\blue Here, the exact null distribution is unknown or complicated due to the complex data-generating process.} Thus, we must construct the Oracle in a non-traditional way:~we allow the Oracle to estimate the parameters of the null distribution by temporarily endowing it with the knowledge of which tests are null and which are not. The Oracle can then estimate the null and alternative parameters using this information. After that, it forgets knowing which tests are null and tries to uncover that truth by running the various procedures. In the Data-Driven scenario, the econometrician does not know the structural parameters and estimates them all. 

We report the results in Table~\ref{Tbl.FinanceSimulation} and Table~\ref{Tbl.FinanceSimulation2}, respectively.  Both tables compare the average FDX, FDR, and power of Procedure 2 to those of  \cite{SunCai07} (SC), \cite{BenHoc95} (BH), \cite{GuoRomano07} (GR), and \cite{LehmannRomano05} (LR). Both tables contain results under different specifications of the standard deviation of $\alpha$, $\sigma_\alpha \in \{1.1\%, 1.25\%, 1.5\% \}$ and under varying specifications of the pairwise correlation coefficient for trading signals, $\rho_\eta \in \{0, 0.1, 0.2\}$. The  $\sigma_\alpha$  determines the signal to noise ratio in the trading signal. Hence, a higher $\sigma_\alpha$ makes the signals more informative. Similarly, a non-zero correlation makes the signal to noise ratio higher by reducing the background noise, but it also places procedures outside of the canonical i.i.d case.

In almost every scenario considered, Procedure 2 sits between BH and GR in terms of FDX and power. From Table~\ref{Tbl.FinanceSimulation} we see that the Oracle version is able to maintain, although a bit conservatively, FDX control while delivering reasonable power, especially in the very difficult scenario (i.e.,  $\sigma_\alpha = 1.1\%$). Procedure 2  outperforms GR, its most natural comparison, in all but one scenario:~when $\sigma_\alpha = 1.1\%$ GR delivers a slightly higher FDX.
Although it is designed to control FDR, as opposed to FDX, the most applied multiple testing procedure is BH. Relative to Procedure 2, BH does guarantee FDR control while maintaining a higher power but at the cost of relatively large FDX, often over 30\%. Increasing correlation among tests makes it easier to separate null and alternative observations, thus leading to an increase in power for all procedures, but even in this case of simple correlation structure which maintains exchangeability, Procedure 2 is still able to accomplish its objectives of delivering FDX control and outpacing the power of other FDX  procedures.

\begin{boxedtable}{Stock return simulation (Oracle)}{Tbl.FinanceSimulation}
\tablexplain{The table compares across 5000 trading strategies the average FDX, FDR, and power obtained from Procedure 2 to those of \cite{SunCai07} (SC), \cite{BenHoc95} (BH), \cite{GuoRomano07} (GR), and \cite{LehmannRomano05} (LR). In each simulation, the returns of $2,\!000$ stocks for 500 observations is generated from a linear factor structure model,  $ret_{it} = \alpha_i + \beta_i' F_t + \varepsilon_{it}$, where $\alpha_i$ represents the return that an investor could realize in excess of the risk generated by the factors, $F_t$ and is drawn from a $\mathcal{N}(0, \sigma_{\alpha}^2)$ and $\varepsilon_{it} \sim \mathcal{N}(0, \sigma_{\varepsilon}^2)$.  In each time period (i.e., month), we draw $S = 5,\!000$ trading strategies for each stock:~a fraction $\pi$ of the trading signals are informative (although imperfectly) about the $\alpha$ of each stock:~$s_{it} = \alpha_i + \eta_{it}$, where $\eta_{it} \sim \mathcal{N}(0, \sigma_{\eta})$.  A fraction $1-\pi$ contain just noise: $s_{it} = \eta_{it}$. Informative and uninformative trading signals can  share some common noise through the correlation coefficient $\rho_\eta$. Each month's stocks are sorted into deciles based on the trading signal realization, and a long short portfolio is obtained from buying stocks in the top deciles and shorting stocks in the bottom deciles. A $t$-statistics of the portfolio regression alpha serves as the relevant test statistic in evaluating each trading strategy/long-short portfolio.  In the base case situation:~$\pi = 0.1$,  $\sigma_{\varepsilon} = 0.15$, $\sigma_\eta = 0.2$. The table presents different simulations that vary $\sigma_\alpha \in \{1.1\%, 1.25\%, 1.5\% \}$ and $\rho_\eta \in \{0, 0.1, 0.2\}$. Each procedure is implemented as an Oracle, that has knowledge of, or is able to accurately estimate, the parameters of the data-generating process. FDX control is implemented at $\gamma = 0.05$ and with a confidence level $1-\alpha = 0.95$. FDR control is implemented at a nominal level of $\alpha = 0.05$.}

\begin{ctabular}{l l rrrrr} 
\midrule
& & SC & BH &  GR & LR & Procedure 2 \\
\midrule
$\sigma_\alpha = 1.5\%$& FDX & 0.471 & 0.289 & 0.034 & 0.000 & 0.039\\ 
$\rho_\eta = 0$& FDR & 0.050 & 0.045 & 0.035 & 0.003 & 0.037\\ 
& Power (\%) & 92.9 & 92.5 & 91.0 & 68.4 & 91.3\\ 
\midrule 
$\sigma_\alpha = 1.25\%$& FDX & 0.493 & 0.340 & 0.032 & 0.003 & 0.049\\ 
$\rho_\eta = 0$& FDR & 0.051 & 0.046 & 0.028 & 0.002 & 0.032\\ 
& Power (\%) & 52.7 & 50.9 & 41.8 & 12.7 & 44.4\\ 
\midrule 
$\sigma_\alpha = 1.10\%$ & FDX & 0.477 & 0.382 & 0.045 & 0.045 & 0.041\\ 
$\rho_\eta = 0$& FDR & 0.050 & 0.045 & 0.006 & 0.006 & 0.018\\ 
& Power (\%) & 18.9 & 17.6 & 2.2 & 2.0 & 8.6\\ 
\midrule 
$\sigma_\alpha = 1.25\%$ & FDX & 0.500 & 0.341 & 0.031 & 0.000 & 0.041\\
$\rho_\eta = 0.1$ & FDR & 0.050 & 0.045 & 0.029 & 0.003 & 0.032\\
 & Power (\%) & 57.5 & 55.9 & 47.8 & 15.8 & 49.8\\
\midrule
$\sigma_\alpha = 1.25\%$ & FDX & 0.483 & 0.334 & 0.031 & 0.000 & 0.044\\
$\rho_\eta = 0.2$  & FDR & 0.050 & 0.046 & 0.030 & 0.003 & 0.033\\
& Power (\%) & 62.3 & 60.8 & 53.6 & 19.6 & 55.3\\
\midrule
\end{ctabular}
\end{boxedtable}

Table~\ref{Tbl.FinanceSimulation2} shows the performance of the same procedures under the realistic scenario where they must  learn the data-generating process from the data. In every scenario of $\sigma_\alpha$ and $\rho_\eta$, both Procedure 2 (\lfdr) and GR deliver FDX control above the desired threshold, with Procedure 2 possessing the higher FDX. This is likely because estimating $\pi$ from the simulated data is particularly challenging. A way to simplify the problem is to start with the assumption that the proportion of informative signal is zero, $\hat{\pi} = 0$. This leads to a substantial decrease in FDX, which nears the desired threshold, and comes at the cost of a slight power reduction.

\begin{boxedtable}{Stock return simulation (Data-Driven)}{Tbl.FinanceSimulation2}
\tablexplain{Across 5000 trading strategies the table compares the average FDX, FDR, and power obtained from Procedure 2 to those of  \cite{SunCai07} (SC), \cite{BenHoc95} (BH), \cite{GuoRomano07} (GR), and \cite{LehmannRomano05} (LR). In each simulation, the returns of 2000 stocks for 500 observations are generated from a linear factor structure model,  $ret_{it} = \alpha_i + \beta_i' F_t + \varepsilon_{it}$, where $\alpha_i$ represents the return that an investor could realize in excess of the risk generated by the factors, $F_t$ and is drawn from a $\mathcal{N}(0, \sigma_{\alpha}^2)$ and $\varepsilon_{it} \sim \mathcal{N}(0, \sigma_{\varepsilon}^2)$.  In each time period (i.e., month), we draw $S = 5000$ trading signals for each stock:~a fraction $\pi$ of the trading signals are informative (although imperfectly so) about the $\alpha$ of each stock: $s_{it} = \alpha_i + \eta_{it}$, where $\eta_{it} \sim \mathcal{N}(0, \sigma_{\eta})$.  A fraction $1-\pi$  contain just noise:~$s_{it} = \eta_{it}$. Informative and uninformative trading signals might share some common noise through the correlation coefficient $\rho_\eta = corr(\eta_i, \eta_j)$. Each month stocks are sorted into deciles based on the trading signal realization, and a long short portfolio is obtained from buying stocks in the top deciles and shorting stocks in the bottom deciles. A $t$-statistic of the portfolio regression alpha serves as the relevant test statistic in evaluating each trading strategy/long-short portfolio.  In the base case situation:~$\pi = 0.1$,  $\sigma_{\varepsilon} = 0.15$, and $\sigma_\eta = 0.2$. The table presents different simulations that vary $\sigma_\alpha \in \{1.1\%, 1.25\%, 1.5\% \}$ and $\rho_\eta \in \{0, 0.1, 0.2\}$. Each procedure is implemented in a Data-Driven fashion:~when necessary the econometrician estimates parameters directly from the observed data without knowledge of the true data-generating process. FDX control is implemented at $\gamma = 0.05$ and with a confidence level $1-\alpha = 0.95$, FDR control is implemented at a nominal level of $\alpha = 0.05$.}
\begin{ctabular}{l l rrrrrr} 
\midrule
&&&&&&\multicolumn{2}{c}{Procedure 2}\\
\cmidrule{7-8}
& & SC & BH &  GR & LR & $lfdr$ & $lfdr(\hat{\pi} = 0)$\\  
\midrule
$\sigma_\alpha = 1.5\%$ & FDX & 0.470 & 0.322 & 0.074 & 0.000 & 0.082 & 0.033\\ 
$\rho_\eta = 0$& FDR & 0.050 & 0.046 & 0.036 & 0.003 & 0.037 & 0.034\\ 
& Power (\%) & 92.9 & 92.5 & 91.0 & 68.4 & 91.3 & 90.6\\ 
\midrule 
$\sigma_\alpha = 1.25\%$ & FDX & 0.593 & 0.398 & 0.073 & 0.002 & 0.123 & 0.069\\ 
$\rho_\eta = 0$& FDR & 0.054 & 0.048 & 0.030 & 0.003 & 0.035 & 0.031\\ 
& Power (\%) & 53.6 & 51.6 & 42.7 & 13.2 & 45.5 & 43.4\\ 
\midrule 
$\sigma_\alpha = 1.1\%$ & FDX & 0.668 & 0.572 & 0.074 & 0.069 & 0.111 & 0.078\\ 
$\rho_\eta = 0$& FDR & 0.063 & 0.057 & 0.008 & 0.007 & 0.026 & 0.022\\ 
& Power (\%) & 22.1 & 20.4 & 2.8 & 2.3 & 11.1 & 9.6\\ 
\midrule 
$\sigma_\alpha = 1.25\%$ & FDX & 0.569 & 0.403 & 0.072 & 0.000 & 0.109 & 0.051\\
$\rho_\eta = 0.1$ & FDR & 0.053 & 0.047 & 0.030 & 0.003 & 0.035 & 0.031\\
& Power (\%) & 58.4 & 56.4 & 48.6 & 16.3 & 50.8 & 48.9\\
\midrule
$\sigma_\alpha = 1.25\%$  & FDX & 0.556 & 0.410 & 0.077 & 0.000 & 0.120 & 0.071\\
$\rho_\eta = 0.2$ & FDR & 0.053 & 0.048 & 0.032 & 0.003 & 0.035 & 0.032\\
& Power (\%) & 63.0 & 61.3 & 54.3 & 20.1 & 56.1 & 54.2\\
\midrule
\end{ctabular}
\end{boxedtable}

\end{document}